\documentclass[letterpaper,journal]{IEEEtran}
\usepackage{cite}
\usepackage{amsmath,amssymb,amsfonts}
\usepackage{graphicx}
\usepackage{textcomp}
\usepackage{xcolor}
\usepackage{float}
\usepackage{booktabs}
\usepackage{xfrac}
\usepackage{bm}
\usepackage{multirow}
\usepackage[colorlinks=true,linkcolor=black!50!blue,citecolor=cyan,urlcolor=blue]{hyperref}
\usepackage{algorithm}
\usepackage{algorithmicx}
\usepackage{algpseudocodex}
\usepackage{yquant}
\useyquantlanguage{groups}
\usepackage[normalem]{ulem}
\usepackage{adjustbox}
\usepackage{makecell}
\usepackage{lscape}
\usepackage{caption}
\def\BibTeX{{\rm B\kern-.05em{\sc i\kern-.025em b}\kern-.08em
    T\kern-.1667em\lower.7ex\hbox{E}\kern-.125emX}}

\renewcommand\a\alpha
\renewcommand\b\beta

\newcommand\<\langle
\renewcommand\>\rangle

\newcommand\ket[1]{\left|#1\right\>}

\DeclareMathOperator{\rank}{rank}

\renewcommand\vec[1]{\bm{#1}}
\renewcommand\v[1]{\vec{#1}}

\newcommand\tensor\otimes
\newcommand\ts\tensor
\newcommand\stp\ltimes
\newcommand\cnot{\mathrm{CNOT}}

\newcommand\Ftwo{\mathbb{F}_2}
\newcommand\supp{\mathrm{supp}}
\newcommand\GL{\mathrm{GL}}

\newcommand\NP{\mathsf{NP}}

\newtheorem{definition}{Definition}

\newtheorem{theorem}{Theorem}

\newtheorem{example}{Example}
\newenvironment{proof}{\noindent\emph{Proof.}}{\hfill$\square$\par}

\newif\ifscaffold
\scaffoldtrue

\ifscaffold
\usepackage[normalem]{ulem}

\newcommand{\branch}[1]{{\color{blue!70!black}\textlangle #1\textrangle}}

\else

\newcommand{\branch}[1]{}

\fi

\newcommand\qiskitsat[1]{{Qiskit-SAT{#1}}}

\usepackage{longtable}
\usepackage{tikz}

\newtheorem{suppdefinition}{Definition}
\newtheorem{supptheorem}{Theorem}[section]
\newtheorem{supplemma}[supptheorem]{Lemma}
\begin{document}

\title{Parallelizable Exact Synthesis of Quantum Circuits via Semi-Tensor Product}

\author{
Chenjian Li, Dingchao Gao, Xiangzhen Zhou*, Ji Guan*, Pengcheng Zhu, Zhufei Chu*\\

%\thanks{Work partially supported by ......
%}

\thanks{
Chenjian Li is with Key Laboratory of System Software (Chinese Academy of Sciences), Institute of Software, Chinese Academy of Sciences, China, and University of Chinese Academy of Sciences, China.

Dingchao Gao is with Key Laboratory of System Software (Chinese Academy of Sciences), Institute of Software, Chinese Academy of Sciences, China, and Beijing Zhongke Arclight Quantum Computing Technology Co., Ltd., China.

Xiangzhen Zhou is with Beijing Zhongke Arclight Quantum Computing Technology Co., Ltd., China, and Nanjing Tech University, China. (email: zhouxiangz@qq.com)

Ji Guan is with Key Laboratory of System Software (Chinese Academy of Sciences), Institute of Software, Chinese Academy of Sciences, China. (email: guanj@ios.ac.cn)

Pengcheng Zhu is with Taizhou University, China.

Zhufei Chu is with Ningbo University, Zhejiang, China. (email: chuzhufei@nbu.edu.cn)
}
}
\maketitle

\begin{abstract}
    Exact synthesis is a key infrastructure in quantum circuit synthesis and optimization, which provides optimal implementations of small circuit shards and is widely used as a circuit re-synthesis optimization kernel. However, existing quantum exact synthesis methods suffer from encoding overhead, memory bottlenecks, and poor parallel scalability. In this work, we introduce a parallel exact synthesis framework for CNOT and phase polynomial circuits based on the semi-tensor product (STP) theory of matrices that avoids these issues. The algorithm contains two stages: it first enumerates candidate circuit topologies, and then instantiates each topology by determining the control and target qubit of its partial gates via a STP-based circuit solver. In the second stage, circuit topologies are encoded as canonical STP expressions, and the CNOT gates are synthesized through right-to-left STP matrix factorization that progressively eliminates infeasible gate decisions. In the framework, topology enumeration and the subsequent solving process are independent across different topologies, and can be naturally parallelized. Despite the $\mathsf{NP}$-hardness of the problem, our algorithm yields up to $12.8\times$ parallel speedup with 32 workers, whereas the parallel speedups of existing SAT-based methods remain below $5\times$ with the same worker budget.
    On randomly generated synthesis targets, the proposed algorithm is typically $100$--$1000\times$ faster than the SAT-based approach on small and moderately difficult instances, and remains competitive for more difficult instances. When integrated in a real-world circuit optimization workflow, our algorithm achieves a median speedup of $3.41\times$ on the QASMBench benchmark.
\end{abstract}

\begin{IEEEkeywords}
Quantum circuit synthesis, exact synthesis, semi-tensor product.
\end{IEEEkeywords}

\section{Introduction}
\IEEEPARstart{Q}{uantum} circuit synthesis and optimization is a key infrastructure of the quantum computation stack,  which translates high-level quantum algorithms into quantum circuits that are directly executable on noisy hardware. Since recent quantum computers are still noisy and error-prone~\cite{nisq}, shorter circuits directly translate into lower error rates and higher success rates of quantum algorithms. In particular, two-qubit gates such as CNOT gates are usually much more expensive and error-prone than single-qubit gates~\cite{zuchongzhi,willow,na}, making CNOT reduction a central optimization objective. %General quantum synthesis methods such as cosine-sine decomposition and quantum Shannon decomposition provide asymptotic optimality guarantees for arbitrary unitaries~\cite{csd,qsd}, but they are far from target-specific optimality.

Exact synthesis asks for an implementation with the optimal gate count for a given target, and has been routinely used as a rewriting kernel on small windows in classical~\cite{soeken2018Practical} and quantum~\cite{li2025HOPPS} circuit optimization. Unfortunately, the exact synthesis of quantum circuits is intractable, and is $\NP$-hard even when restricted to CNOT+$R_z$ circuits~\cite{amy2018CNOTcomplexity}. This setting changes the desired algorithmic profile: instead of solving very deep circuits, an exact kernel should have low overhead, certifiable optimality, and be easy to embed in larger synthesis workflows.

Existing exact methods for CNOT circuits fall mainly into two categories: SAT-based methods and database-based methods. SAT-based methods~\cite{shaik2024Optimal,li2025HOPPS} encode the problem into Boolean constraints and invoke a general-purpose SAT solver to find the optimal circuit; they are flexible, but the encoding and tool-invocation overhead can be substantial on small and medium instances, and are difficult to parallelize~\cite{katsirelos2013resolution,haaswijk2020SATBased}. Database-based methods~\cite{CNOT_opt_UCL,CNOT_opt_group_theoretic} enumerate all optimal circuits in advance and answer queries by lookup; they are fast at query time but require humongous memory, making them difficult to scale beyond a few qubits and inconvenient to be integrated into larger synthesis workflows. %In this work, we propose a third approach: an exact circuit solver that preserves and directly exploits the algebraic structure of CNOT circuits.

%This paper introduces a parallel exact-synthesis framework for CNOT and phase polynomial (CNOT+$R_z$) quantum circuits based on the semi-tensor product (STP) theory of matrices~\cite{stp}. The key observation is that CNOT-circuit synthesis can be decomposed into two stages: first, the algorithm determines the locations of interacting qubit pairs, forming \emph{topologies} that serve as CNOT circuit skeletons; in the second stage, a circuit solver based on STP factorization is introduced to determine the control and target qubits of the interacting qubit pairs, thereby instantiating a topology as a concrete CNOT circuit. Both the enumeration stage of circuit skeletons and the solving stage of different skeletons are largely independent across skeletons and can therefore be distributed naturally across workers with little coordination overhead. In our experiments on random circuits with 32 workers, the proposed method achieves a maximum parallel speedup of $12.8\times$, whereas the speedup of the parallel SAT-based method remains below $5\times$ in all tested configurations. The following figure illustrates how a topology is instantiated as a CNOT circuit.

This paper introduces a parallel exact-synthesis framework for CNOT and phase polynomial (CNOT+$R_z$) quantum circuits based on the semi-tensor product (STP) theory of matrices~\cite{stp} that avoids these issues.
Inspired by STP-based circuit solving~\cite{pan2023Exact,pan2023Semitensor} and topology-based exact synthesis~\cite{haaswijk2020SATBased} in classical logic synthesis, we decompose CNOT synthesis into two stages. First, our algorithm determines the circuit \emph{topology}, which consists of interacting qubit pairs that serve as undirected CNOT gates; it then invokes an STP-factorization-based circuit solver to determine the control--target direction of each pair and instantiate the topology into a concrete CNOT circuit. However, adapting this architecture to quantum circuits introduces unique challenges: unlike the typical classical logic gates that has single output, a CNOT gate is a reversible two-output operation, coupling the transformations of both participating qubits. This structural difference requires substantial redesign of the circuit solver. We therefore develop an STP-factorization-based solver tailored specifically to CNOT topology solving. %, achieving orders of magnitude speedups than SAT solvers in solving single topology.
Since different topologies can be enumerated and solved largely independently, both stages can be naturally distributed across workers with little coordination overhead. In our random circuit experiments, the proposed method achieves a maximum parallel speedup of $12.8\times$ with 32 workers, while the parallel SAT-based method remains below $5\times$ across all tested configurations. %The following figure illustrates how a topology is instantiated as a CNOT circuit.

To summarize, the contributions of our work are as follows:
\begin{enumerate}
    \item We propose a two-stage exact synthesis framework for CNOT circuits that combines topology enumeration with an STP-based circuit solver (Sec.~\ref{sec:algorithm}), and further extend the framework to phase polynomial circuits (Sec.~\ref{sec:ext_to_phasepoly}). The proposed synthesis framework avoids heavy encoding overhead, does not rely on large precomputed tables, and naturally exposes parallelism across independent topologies.
    \item We develop an efficient STP-based circuit solver tailored to CNOT synthesis (Sec.~\ref{subsec:factorization_and_solver}). The solver encodes each topology into STP formulas and solves it through symbolic STP factorization, achieving more than $500\times$ speedup over the SAT solver for single-topology solving in most tested configurations (Sec.~\ref{sec:exp_single_topo}).
    \item We develop optimization techniques tailored for CNOT exact synthesis, covering both topology pruning (Sec.~\ref{subsec:topo-prune}) and STP factorization (Theorem \ref{theorem2} in Sec.~\ref{subsec:factorization_and_solver}). The proposed pruning strategies eliminate more than $99.5\%$ candidate topologies in representative cases, while our CNOT-specific simplification substantially reduces the cost of STP factorization.
    \item We evaluate our method on both random synthesis targets (Sec.~\ref{sec:exp_rand_circ}) and real-world circuits (Sec.~\ref{sec:cnot-resynthesis}). STP is typically $100$--$1000\times$ faster than Qiskit-SAT~\cite{qiskit-sat} on small and moderately difficult instances, while remaining competitive on harder cases. On QASMBench~\cite{li2023qasmbench}, the STP-based optimization workflow achieves a median speedup of $3.41\times$ over the SAT-based baseline under a matched worker budget.
\end{enumerate}

%The remainder of this article is organized as follows. In Sec.~\ref{sec:preliminaries} we provide a basic introduction to the related topics and concepts. In Sec.~\ref{sec:algorithm} we introduce our synthesis algorithm and demonstrate how it works for the vanilla CNOT synthesis problem. In Sec.~\ref{sec:ext_to_phasepoly} we extend the synthesis framework to phase polynomial circuits. Finally in Sec.~\ref{sec:experiments} we conduct experiments on random and real-world circuits and demonstrate the effectiveness of our synthesis framework.

\section{Preliminaries}
\label{sec:preliminaries}

\subsection{Quantum Computation}\label{sec:pre_qc}

\paragraph{Qubits and quantum states} Quantum computing is a computing paradigm that attempts to leverage the principles of quantum mechanics to perform computations. While classical data are represented in bits, whose values are \textit{either} 0 or 1, a quantum bit (qubit) can be in a \textit{superposition} of both 0 and 1. Mathematically, a qubit can be described as a linear combination of both 0 and 1:
\begin{equation}
    |\psi\>=\alpha|0\>+\beta|1\>,
\end{equation}
where $\a,\b$ are complex numbers satisfying the normalization constraint $|\a|^2+|\b|^2=1$. Furthermore, an $n$-qubit system can be in a state
\begin{equation}
    \ket\psi=\alpha_{0\cdots 0}\ket{0\cdots 0}+\alpha_{0\cdots 1}\ket{0\cdots 1}+\cdots+\alpha_{1\cdots 1}\ket{1\cdots 1},
\end{equation}
where $\alpha_{\v x}=\a_{x_1x_2\cdots x_n}\in\mathbb{C}$ and $\sum_{\v x}|\alpha_{\v x}|^2=1$. Thus an $n$-qubit quantum state is represented as a $2^n$-dimensional quantum state, exponential in the qubit number $n$. This superposition capability allows qubits to represent exponentially many states simultaneously, which provides a fundamental parallelism advantage over classical bits and is believed to be the root of quantum computation's power.

% \undone{todo}{change this example (no need to use GHZ state! use some more pertinent states instead)}
% \begin{example}
%     The quantum state
%     \begin{equation}
%         |\psi\>=\frac{|000\>+|111\>}{\sqrt 2}
%     \end{equation} is a superposition state over $3$ qubits and is called the GHZ state. %When the qubits are measured, they will end up showing $000$ or $111$ with probability $(\sfrac{1}{\sqrt2})^2=\sfrac 12$ respectively.
% \end{example}

\paragraph{Quantum gates and quantum circuits} Quantum gates are unitary operators (matrices) that map a quantum state to another one, thus acting as the elementary building blocks of quantum computation. Some common quantum gates are:
\begin{enumerate}
    \item $Z$-Rotation ($R_z(\theta)$) gate, which adds a relative phase between computational basis states:
    \begin{equation}
        R_z(\theta)|0\>=e^{-i\theta/2}|0\>,\quad R_z(\theta)|1\>=e^{i\theta/2}|1\>.
    \end{equation}
    \item Hadamard ($H$) gate, which is usually used to create superposition states:
    \begin{equation}
        H|0\>=\frac{|0\>+|1\>}{\sqrt{2}},\quad H|1\>=\frac{|0\>-|1\>}{\sqrt{2}}.
    \end{equation}
    \item Controlled-Not ($\mathrm{CNOT}$) gate, which flips the target qubit when the control qubit is in state $|1\>$:
    % \begin{align}
    %     \cnot\big(|0\>\ts |\psi\>\big)&=|0\>\ts|\psi\>,\,
    %     \cnot\big(|1\>\ts |\psi\>\big)&=|1\>\ts X|\psi\>,
    % \end{align}
    % where the NOT gate, or Pauli-$X$ gate, is defined as
    % \begin{equation}
    %     X|0\>=|1\>,\quad X|1\>=|0\>.
    % \end{equation}

    %CNOT gate is the most widely-used and most important two-qubit gate, and is a key component in many quantum algorithms, such as quantum teleportation and quantum error correction.

    \begin{equation}
        \cnot|x_c,x_t\>=|x_c,(x_c\oplus x_t)\>,
    \end{equation}
    where $x_c,x_t\in\{0,1\}$ denote the control and target bits, respectively, and ``$\oplus$'' denotes addition modulo $2$, i.e., addition over the binary field $\mathbb F_2$. %As a result, a CNOT gate is linear, and admits a $n\times n$ parity matrix representation.
\end{enumerate}

Building upon quantum gates, quantum circuits are diagrams consisting of sequentially applied quantum gates. In quantum circuits, qubits are represented as wires and quantum gates are represented as boxes that are applied from left to right.
% \begin{example}
%     Given two qubits starting at $|00\>$, the following circuit transforms them into $\frac{|00\>+e^{i\theta/2}|11\>}{\sqrt 2}$.
%     \begin{figure}[htbp]
%     \centering
%     \vspace{-1em}
%     \begin{tikzpicture}
%     \begin{yquant}[register/minimum height=4mm, operator/separation=5mm, control style={radius=2pt}]
%         qubit {$|0\>$} q[2];
%         h q[0];
%         box {$R_z(\theta)$} q[0];
%         cnot q[1]|q[0];
%         output {$\frac{|00\>+e^{i\theta}|11\>}{\sqrt 2}$} (-);
%       \end{yquant}
%     \end{tikzpicture}
%     \label{qcirc:GHZ_prepare}
%     \end{figure}
% \end{example}

\paragraph{Linear and phase polynomial circuits}
As the CNOT gate is linear over $\mathbb F_2$, quantum circuits consisting of pure CNOT gates are also called \emph{linear circuits} sometimes. We use the terms CNOT circuits and linear circuits interchangeably throughout the paper. Accordingly, an $n$-qubit CNOT circuit can be represented as an $n\times n$ parity matrix over $\mathbb F_2$.
%, as opposed to a $2^n\times 2^n$ unitary matrix required to represent a general quantum circuit.
For example, a single CNOT gate can be represented as
$\cnot_{1,2}\sim\left[\begin{smallmatrix}
    1 & 0 \\
    1 & 1
\end{smallmatrix}\right].$ A more complicated example is as follows:
\begin{example}\label{example:pmh_linear_circ}
    Consider the following circuit that was first introduced in~\cite{pmh_optimal_linear_synthesis}. Denote the input variables as $x_1\sim x_4$. Then the circuit
    \begin{figure}[H]
    \centering
    \vspace{-0.6em}
    \begin{tikzpicture}
    \begin{yquant}[operator/separation=3mm]
        qubit {} q[4];
        init {$x_1$} q[0];
        init {$x_2$} q[1];
        init {$x_3$} q[2];
        init {$x_4$} q[3];
        cnot q[1]|q[0];
        cnot q[3]|q[2];
        cnot q[2]|q[1];
        cnot q[1]|q[2];
        cnot q[0]|q[1];
        cnot q[3]|q[2];
        output {$x_1\oplus x_3$} q[0];
        output {$x_3$} q[1];
        output {$x_1\oplus x_2\oplus x_3$} q[2];
        output {$x_1\oplus x_2\oplus x_4$} q[3];
      \end{yquant}
    \end{tikzpicture}
    \vspace{-0.6em}
    \label{qcirc:pmh_linear_circ_example}
    \end{figure}implements the following linear parity matrix:
\begin{equation}
    A=\left[\begin{smallmatrix}
        1 & 0 & 1 & 0 \\
        0 & 0 & 1 & 0 \\
        1 & 1 & 1 & 0 \\
        1 & 1 & 0 & 1 \\
    \end{smallmatrix}\right].\label{eq:A_eg}
\end{equation}
\end{example}

Adding $R_z$ gates into pure CNOT circuits, one can further form phase polynomial circuits~\cite{dawson2004quantum,montanaro2016quantum,chen2025phasepoly}. This type of circuit can be expressed in a \emph{sum-over-path} form:
\begin{equation}
    U\ket{\v x} = e^{2\pi i p(\v x)}\ket{A\v x},
\end{equation}
where $p(\v x)$ is the \emph{phase polynomial}. Here $A$ is the parity matrix corresponding to the CNOT circuit skeleton, and $p(\v x)$ is used to represent the extra phases introduced by the $Z$-rotation gates. Specifically, $p(\v x)$ usually takes the following form~\cite{t-par,amy2018CNOTcomplexity}:
\begin{equation}
    p(x_1,\cdots,x_n)=\sum_{\v a\in \{0,1\}^n}\theta_{\v a}\cdot \big(a_1x_1\oplus\cdots\oplus a_nx_n\big).
\end{equation}
Following Amy's notation~\cite{amy2018CNOTcomplexity}, the terms $a_1x_1\oplus \cdots \oplus a_nx_n$ can be represented as
\begin{equation}
    p_{\v a}(\v x)=\bigoplus_{i=1}^n a_ix_i
\end{equation}
and are referred to as \emph{parities}.

\begin{example}
    The phase polynomial of the $CCZ$ gate, or the doubly-controlled-$Z$ gate, is as follows:
    \begin{multline}
        p(\v x)=\frac{1}{2}x_1\cdot x_2\cdot x_3\equiv \frac{1}{8}\Big(x_1+x_2+x_3+ 3(x_1\oplus x_2) \\+3(x_2\oplus x_3)+3(x_1\oplus x_3)+x_1\oplus x_2\oplus x_3\Big)\, (\text{mod }1).
    \end{multline}
    Furthermore, the $CCZ$ gate can be implemented with the following circuit:
    \begin{figure}[H]
    \centering
    \vspace{-.6em}
    \begin{adjustbox}{scale=0.8, center}
    \begin{tikzpicture}
    \begin{yquant}[operator/separation=3mm]
        qubit {} x[3];
        init {$x_1$} x[0];
        init {$x_2$} x[1];
        init {$x_3$} x[2];
        box {$R_z(\frac{\pi}{4})$} x[0];
        box {$R_z(\frac{\pi}{4})$} x[1];
        cnot x[0]|x[2];
        cnot x[2]|x[1];
        box {$R_z(\frac{3\pi}{4})$} x[0];
        box {$R_z(\frac{3\pi}{4})$} x[2];
        cnot x[0]|x[1];
        cnot x[2]|x[1];
        box {$R_z(\frac{\pi}{4})$} x[0];
        cnot x[0]|x[2];
        box {$R_z(\frac{3\pi}{4})$} x[0];
        box {$R_z(\frac{\pi}{4})$} x[2];
        cnot x[0]|x[1];
      \end{yquant}
    \end{tikzpicture}
\end{adjustbox}
    \vspace{-1.8em}
    \label{qcirc:CCZ}
    \end{figure}
\end{example}

\paragraph{Partial gates and circuit topologies} \label{para:pre_partial_gates_topo} In our synthesis framework, we need to represent CNOT gates whose direction is undetermined; leveraging the nomenclature from the classical logic synthesis community, we call such undetermined gates (2-qubit) \textit{partial gates}, which are represented using the following notation:

\begin{figure}[H]
\centering
\vspace{-1em}
\begin{tikzpicture}
  \begin{yquantgroup}[register/minimum height=4mm, operator/separation=1mm]
     \registers{
       qubit {} q[2];
     }
     \circuit{
       [control style={draw,fill=white,shape=yquant-rectangle,radius=3pt}, operator style={draw,fill=white,shape=yquant-rectangle,radius=3pt}]
       cnot q[0]|q[1];
     }
     \equals
     \circuit{
       cnot q[1]|q[0];
     }
     \equals[or]
     \circuit{
       cnot q[0]|q[1];
     }
  \end{yquantgroup}
\end{tikzpicture}
\vspace{-1em}
\end{figure}
The direction of a partial gate is determined by a single binary variable that takes the value $\mathsf{CT}$ (the left case in the figure) or $\mathsf{TC}$ (the right case). We call the variable a \textit{gate direction} or a \textit{gate decision}.
Furthermore, a quantum circuit composed of partial gates is called a  \textit{(circuit) topology}. A partial gate or circuit topology is said to be \textit{instantiated} into an actual gate or circuit when gate decisions are assigned to it.

More generally, following previous works on classical exact synthesis~\cite{soeken2018Practical}, one can introduce multi-qubit partial gates, which span across multiple qubits, and is later instantiated by selecting two distinct qubits as the control and target qubit. For example, the following figure demonstrates a 3-qubit partial gate with $3\times 2=6$ possible gate instantiations.

\begin{figure}[H]
\centering
\vspace{-.6em}
\begin{adjustbox}{max width=\columnwidth,center}
\begin{tikzpicture}
  \begin{yquantgroup}[register/minimum height=4mm, operator/separation=1mm]
     \registers{
       qubit {} q[3];
     }
     \circuit{
       [control style={draw,fill=white,shape=yquant-rectangle,radius=3pt},
        operator style={draw,fill=white,shape=yquant-rectangle,radius=3pt}]
       cnot q[0]|q[1,2];
     }
     \equals
     \circuit{
       cnot q[1]|q[0];
       align q;
     }
     \equals[or]
     \circuit{
       cnot q[0]|q[1];
       align q;
     }
     \equals[or]
     \circuit{
       cnot q[2]|q[0];
       align q;
     }
     \equals[or]
     \circuit{
       cnot q[0]|q[2];
       align q;
     }
     \equals[or]
     \circuit{
       cnot q[2]|q[1];
       align q;
     }
     \equals[or]
     \circuit{
       cnot q[1]|q[2];
       align q;
     }
  \end{yquantgroup}
\end{tikzpicture}
\end{adjustbox}
\vspace{-1.8em}
\label{fig:three_qubit_partial_gate}
\end{figure}

However, we will later show in Sec.~\ref{subsec:multi-qubit_partial_gate} that is optimal in our application, in the sense that it introduces least decision space redundancy and is relatively easy to parallelize. We therefore use 2-qubit partial gates throughout the paper unless otherwise stated, and simply refer to them as partial gates whenever no ambiguity arises.

\subsection{Semi-Tensor Product}

The semi-tensor product of matrices generalizes conventional matrix multiplication to matrices with mismatched dimensions. Interested readers can refer to \emph{An Introduction to Semi-Tensor Product of Matrices and Its Applications}~\cite{stp} for a more detailed introduction.
\begin{definition}[Semi-Tensor Product~\cite{stp}]
    For two matrices $A\in \mathbb{F}^{m\times k_1}$ and $B\in \mathbb{F}^{k_2\times n}$ where $\mathbb{F}$ denotes an arbitrary field, their semi-tensor product is defined as
\begin{equation}
    A\stp B = (A\otimes I_{K/k_1})\cdot (B\otimes I_{K/k_2}),
\end{equation}
where $K:=\mathrm{lcm}(k_1,k_2)$.
\end{definition}

When $k_1=k_2$, STP reduces to ordinary matrix multiplication. When the dimensions mismatch, the inserted identity matrices act as placeholders that carry unaffected variables through the product.

% As a first and direct application, STP can be used to represent the combination of quantum systems. Suppose quantum systems are represented as vectors
% \begin{equation}
%     |q_1\>=\begin{bmatrix}
%             \a \\ \b
%     \end{bmatrix},
%     |q_2\>=\begin{bmatrix}
%         \gamma \\ \delta
%     \end{bmatrix},\\
% \end{equation}
% then the combination of quantum systems happens to be the semi-tensor product of two vector states:
% \begin{equation}
%     \begin{bmatrix}
%         \a \\ \b
%     \end{bmatrix}\stp \begin{bmatrix}
%         \gamma \\ \delta
%     \end{bmatrix} =\left(\begin{bmatrix}
%         \a \\ \b
%     \end{bmatrix}\otimes I_2\right)\cdot \begin{bmatrix}
%         \gamma \\ \delta
%     \end{bmatrix} =
%     \begin{bmatrix}
%         \a \\ \b
%     \end{bmatrix}\otimes \begin{bmatrix}
%         \gamma \\ \delta
%     \end{bmatrix}=|q_1q_2\>.
% \end{equation}

STP can be used to model multivariate Boolean functions. Denote logical True and False by vectors:
    \begin{equation}
    \label{eq:tf}
        \mathsf{False}=|0\>=\begin{bmatrix}
            1 \\ 0
        \end{bmatrix},
        \mathsf{True}=|1\>=\begin{bmatrix}
            0 \\ 1
        \end{bmatrix}.
    \end{equation}
Then any binary logic operation $\sigma$ can be represented as a STP formula:
    \begin{equation}
        a\ \sigma\ b= M_{\sigma}\stp a \stp b,
    \end{equation}
    where $a,b$ are Boolean variables (in their vector form), and $M_\sigma$ is the \textit{structural matrix} corresponding to the Boolean operation $\sigma$. For example, for $\sigma=\oplus$, we have
    \begin{equation}
        a\oplus b=M_{\oplus}\stp a\stp b,
    \end{equation}
    where $M_\oplus =\left[\begin{smallmatrix}
        1 & 0 & 0 & 1 \\
        0 & 1 & 1 & 0 \\
    \end{smallmatrix}\right]$.

\begin{figure*}[t] % 手调位置
    \centering
    \includegraphics[width=0.98\textwidth]{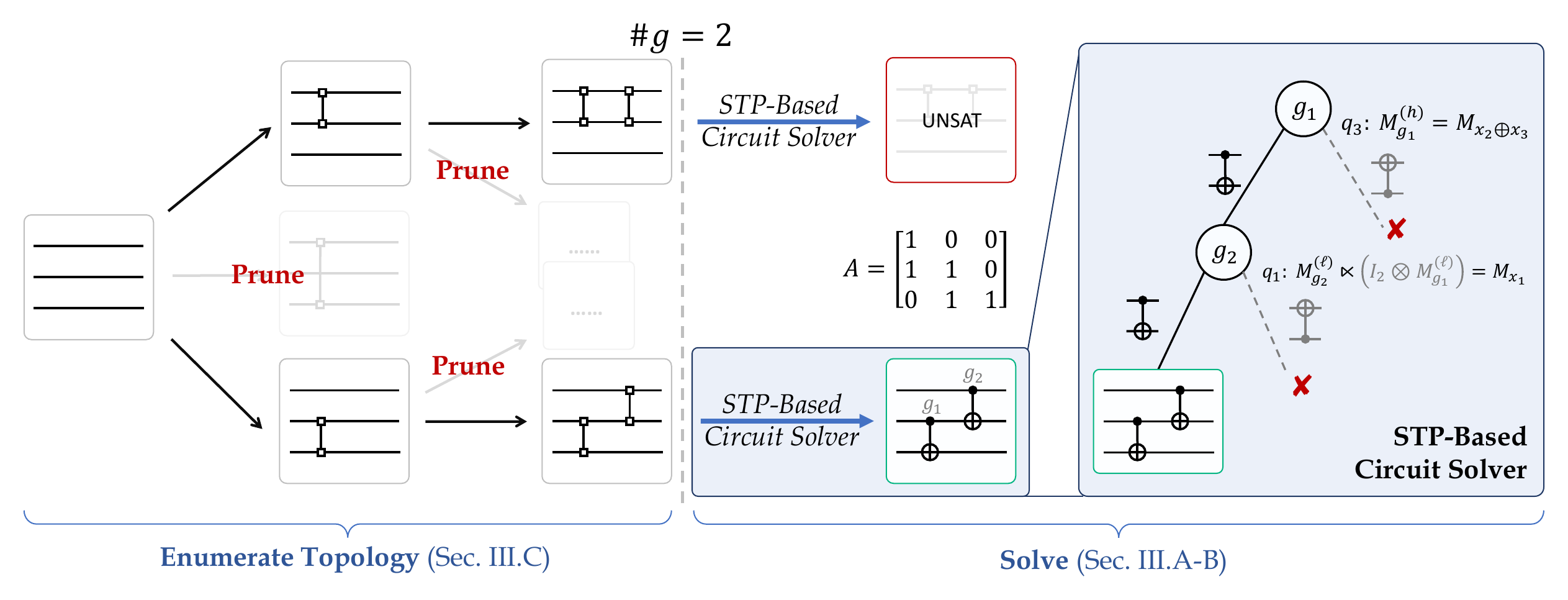}
    \caption{Overall structure of our STP-based synthesis algorithm for a fixed gate count ($\# g$). Given a synthesis target, candidate topologies are enumerated, while infeasible ones are eliminated through pruning.
    Each surviving topology is then passed to an STP-based circuit solver, which either proves it unsatisfiable or returns a complete circuit implementation. Both topology enumeration and circuit solving can be processed in parallel, since different topologies can be handled independently.}
    \label{fig:algo_arch}
\end{figure*}

The STP expression for a general logic formula can be complicated. For example, the logic expression
\begin{equation}
    \phi(a,b,c)=c\leftrightarrow( a \land \lnot b)
\end{equation}
can be converted into STP form
\begin{equation}
    \phi(a,b,c)= M_{\leftrightarrow} c \big(M_\land a (M_\lnot b)\big),\label{eq:logic_stp}
\end{equation}
where all ``$\stp$'' symbols have been omitted for simplicity. Furthermore, any STP expression can be converted into a simple \textit{canonical form}, which is defined as
\begin{equation}
    \Phi(x_1,x_2,...,x_n)=M_{\Phi}\stp x_1\stp x_2\cdots \stp x_n,
\end{equation}
where $M_\Phi$ is a single $2\times 2^n$ matrix. Some simple reasoning can show that the following three rules suffice to convert any STP expression to its canonical form:

\begin{enumerate}
    \item (Matrix-Variable Swap Rule) For a matrix $M$ and a variable $a$, $a\stp M=(I_2\otimes M)\stp a$.
    \item (Variable-Variable Swap Rule) For variables $a, b$, $b\stp a=M_w\stp a\stp b$, where
    \begin{equation}
        M_w=\mathsf{SWAP}=\left[\begin{smallmatrix}
            1 & 0 & 0 & 0 \\
            0 & 0 & 1 & 0 \\
            0 & 1 & 0 & 0 \\
            0 & 0 & 0 & 1 \\
        \end{smallmatrix}\right]
    \end{equation}
    is the swap matrix.
    \item (Variable-Reduce Rule) For a binary variable $a$, $a^{\stp 2}=a\stp a=M_r\stp a$, where
    \begin{equation}
        M_r=\left[\begin{smallmatrix}
            1 & 0 \\
            0 & 0 \\
            0 & 0 \\
            0 & 1 \\
        \end{smallmatrix}\right]
    \end{equation}
    is the power-reducing matrix.
\end{enumerate}

\begin{example}
    Convert~\eqref{eq:logic_stp} to its canonical form. Omit the STP symbols for simplicity.
    \begin{align}
        &\mathrel{\phantom{=}}\phi(a,b,c)\\
        &=M_{\leftrightarrow} c \big(M_\land a (M_\lnot b)\big)\\
        &=M_{\leftrightarrow}(I_2\otimes M_\land)(I_4\otimes M_{\lnot})cab \\
        &=M_{\leftrightarrow}(I_2\otimes M_\land)(I_4\otimes M_{\lnot})M_w(I_2\otimes M_w)abc.
    \end{align}
\end{example}

\section{STP Exact Synthesis Framework}
\label{sec:algorithm}

%Core techniques: Factorization, Decision Forest
In this section, we introduce our synthesis framework. We first describe the overall structure; then give a detailed illustration of each module. The overall algorithm structure is illustrated in Fig.~\ref{fig:algo_arch}, and the detailed pseudocode for our algorithm is listed in Alg.~\ref{alg:exact_syn} and Alg.~\ref{alg:factor}.

Our algorithm takes an $n$-qubit linear parity matrix $A\in\mathrm{GL}(n,2)$ as input. It enumerates all topologies with $1$ partial gate, $2$ partial gates, and so on, until one or more solutions are found. Since topologies are explored in increasing gate count, the first feasible solutions are guaranteed to be optimal.% as opposed to heuristic or A* algorithms~\cite{qsearch,CNOT_opt_UCL}.
For each topology, a STP-factorization-based circuit solver is then called to determine whether the partial gates can be instantiated to implement the target $A$. As there are no dependencies between topologies, different topologies can be processed independently and hence solved directly in parallel. Since the number of candidate topologies grows exponentially as the number of partial gates increases, we further develop a group of effective topology-pruning strategies to eliminate infeasible candidates before solving to improve the algorithm's efficiency (Sec.~\ref{subsec:topo-prune}).

To solve a topology, we first convert it into canonical STP expressions and constrain their outputs according to the target matrix $A$ (Sec.~\ref{subsec:topo_to_stp}). The circuit solver then factorizes these expressions from right to left, eliminating gate decisions whenever the corresponding factorization becomes infeasible, until all constraints are resolved (Sec.~\ref{subsec:factorization_and_solver}). If all gate decisions are ruled out, the solver terminates early and returns $\mathsf{UNSAT}$; otherwise, the surviving gate decisions instantiate the topology into valid circuit solutions.

Throughout the above framework, we use the two-qubit partial gates. But partial gates may actually span $k>2$ qubits in general, as shown in classical exact synthesis~\cite{soeken2018Practical}. After presenting the main algorithmic modules, we revisit this design choice in Sec.~\ref{subsec:multi-qubit_partial_gate} and show that the efficiency of the topology-based synthesis algorithm is more likely to be optimal at the two extremal choices, $k=2$ and $k=n$ for $n$-qubit synthesis target, by analyzing and the topology-to-circuit representation redundancy. The analysis provides a theoretical justification for our use of 2-qubit partial gates in the framework and also identifies the $k=n$ architecture (which is more difficult to parallelize) as an alternative direction for future work.

%Throughout the above framework, we use the two-qubit partial gates. But as shown in classical exact synthesis~\cite{soeken2018Practical}, partial gates may generally span $k>2$ qubits in general. After presenting the main algorithmic modules, we revisit this design choice in Sec.~\ref{subsec:multi-qubit_partial_gate} and analyze how the partial-gate width $k$ affects the overall synthesis search space. Under a simplified solver model, we show that the overall synthesis search space is minimized at the two extremal choices, $k=2$ and $k=n$ for $n$-qubit synthesis target, whereas intermediate widths introduce redundant representations of the same CNOT gates. This analysis provides a theoretical justification for our use of 2-qubit partial gates adopted in our framework and also identifies the $k=n$ architecture (which is more difficult to parallelize) as an alternative direction for future work.

\subsection{Encoding topology as STP formulas}\label{subsec:topo_to_stp}
For each topology, the solver starts by converting the topology to STP expressions. Consider for example, the following topology and target:
\begin{figure}[H]
\centering
\vspace{-.6em}
\begin{tikzpicture}
  \begin{yquant}[register/minimum height=4mm, operator/separation=4mm]
    qubit {$x_{\protect\the\numexpr\idx+1}$} q[3];
    [control style={draw,fill=white,shape=yquant-rectangle,radius=3pt}, operator style={draw,fill=white,shape=yquant-rectangle,radius=3pt}]
    cnot q[1]|q[2];
    [control style={draw,fill=white,shape=yquant-rectangle,radius=3pt}, operator style={draw,fill=white,shape=yquant-rectangle,radius=3pt}]
    cnot q[0]|q[1];
    %output {$x_{\protect\the\numexpr\idx+1}''$} q;
    output {$x_1''=x_1$} q[0];
    output {$x_2''=x_1\oplus x_2$} q[1];
    output {$x_3''=x_2\oplus x_3$} q[2];
  \end{yquant}
  %\node[] at (1.1,0) {$q_1'$};
  \node[] at (1.1,-0.46) {$x_2'$};
  %\node[] at (1.1,-1.2) {$q_3'$};
  \node[] at (0.7,-1.5) {$g_1$};
  \node[] at (1.6,-1.5) {$g_2$};
\end{tikzpicture}
\vspace{-1.2em}
\end{figure}

For the first output qubit, the canonical STP expression of the topology is
\begin{align}
    \Phi_{T,1} x_1x_2x_3&=M_{g_2}^{(\ell)}x_1x_2'\\
    &=M_{g_2}^{(\ell)}x_1 M_{g_1}^{(\ell)}x_2x_3\\
    &=M_{g_2}^{(\ell)}(I_2\ts M_{g_1}^{(\ell)})x_1x_2x_3,
\end{align}
where $M_{g_i}$ denotes the structural matrix of partial gate $g_i$, and the superscripts $(\ell)$ and $(h)$ distinguish its two outputs. Here we have omitted the ``$\stp$'' symbol for readability. Meanwhile, the target for the first output qubit is $x_1''=x_1$, so the required target structural matrix for the first output qubit is:
\begin{equation}
    x_1''=M_1 x_1 x_2 x_3,\quad
    M_1=\left[\begin{smallmatrix}
        1 & 1 & 1 & 1 & 0 & 0 & 0 & 0 \\
        0 & 0 & 0 & 0 & 1 & 1 & 1 & 1 \\
    \end{smallmatrix}\right].
\end{equation}
Since the topology must realize the target for every input, their structural matrices must be equal, yielding the constraint $\Phi_{T,1}=M_1$. Then repeating the same construction for every output qubit gives the complete constraint system for topology $T$, which is then passed to the solver:
\begin{equation}
    \Phi_{T,i}=M_i, \quad \forall\, i\in[n].
\end{equation}

\begin{algorithm}[t]
	\caption{Exact synthesis algorithm for CNOT circuits}
    \label{alg:exact_syn}
	\begin{algorithmic}[1]
		\Require Linear parity matrix $A\in\mathrm{GL}(n,2)$.
		\Ensure Optimal CNOT circuit(s) implementing $A$.
		\Function {ExactSynLinearSTP}{$A$}
			\State $ \#g\gets 0$ \Comment{Target gate number}
            \State $S\gets \varnothing$ \Comment{Solution set}
            \While{$S= \varnothing$}
                \State $ \mathcal T_{\#g,A}\gets\textsc{EnumerateTopology}(\#g, A)$
                \For{$ T\in \mathcal T_{\#g, A}$} \Comment{Parallelizable}
				    %\If{$\textsc{Solve}(T, A)\neq\mathsf{UNSAT}$}
                    \State $\mathcal D\gets \textsc{Solve}(T,A)$
                    \State $S\gets S\cup\{\textsc{Instantiate}(T, \mathcal D)\} $
				    %\EndIf
                \EndFor
                \State $\#g\gets \#g +1$
            \EndWhile
            \State \Return $S$
		\EndFunction
        \vspace{.6em}
        \Function {EnumerateTopology}{$\#g, A$}
			\State $\mathcal T_{0,A}\gets\{C_\epsilon\}$ \Comment{$n$-qubit empty circuit}
            \For{$\#g'=1,2,\cdots,\#g$}
                \State $\mathcal T_{\#g',A}\gets\varnothing$
                \For{$T\in \mathcal T_{\#g'-1,A}$} \Comment{Parallelizable}
                    \For{$q_1,q_2\in[n]$ with $q_1<q_2$}
                        \State $T_{\mathrm{new}}\gets T:\mathrm{PartialGate}(q_1,q_2)$
                        \If{not \textsc{Prune}($T_{\mathrm{new}},\#g,A$)}
                            \State $\mathcal T_{\#g',A}\gets\mathcal T_{\#g',A}\cup \{T_{\mathrm{new}}\}$
                        \EndIf
                    \EndFor
                \EndFor
            \EndFor
            \State \Return $\mathcal T_{\#g, A}$
		\EndFunction
	\end{algorithmic}
\end{algorithm}

\begin{algorithm}[t]
\caption{STP-based circuit solver}
\label{alg:factor}
\begin{algorithmic}[1]
\Require Circuit topology $T$, Synthesis target $A$.
\Ensure Set of feasible gate decisions $\mathcal D$.

\Function {Solve}{$T, A$}
    \State Convert $T$ to canonical STP expressions $\Phi_{T,i}$
	\State Convert $A$ to structural matrices $M_i\in \mathbb{F}_2^{2\times 2^{n}}$
            %\State $D\gets \{\mathsf{CT, TC}\}^{\#g}$
    \State $\mathcal D\gets $\textsc{Factorize}($\{(\Phi_{T,i},M_i):i\in[n]\}$)
    \State \Return $\mathcal D$
\EndFunction
\vspace{0.6em}
\Function{Factorize}{$\{(\Phi_{T,i},M_i)\}$}
    \State Initialize the global decision set as $\mathcal D\gets\{\mathsf{CT},\mathsf{TC}\}^{\#g}$
    \State Initialize the equation pool as $\mathcal E\gets\{(\Phi_{T,i},M_i, \mathcal D)\}$ %\\\Comment{$D_0$ is the allowed decision set corresponding to the formula}
    \While{$\mathcal E\neq\varnothing$ and  $\mathcal D\neq\varnothing$}
        \State $(\Phi, M, D)\gets \mathcal E.\mathrm{pop}()$ with heuristic
        %\State $D\gets D-\mathcal D$ \Comment{Update $D$ and may cause backjumping}
        \State Suppose $\Phi=\Phi'\stp (I_{2^t}\ts N)$
        \If{$N=M_w$ or $M_r$}
            \State Factorize $M$ as $M=M'\stp (I_{2^t}\ts N)$ \Comment{Always feasible}
            \State $\mathcal E.\mathrm{push}(\Phi',M',D)$
        \ElsIf{$N$ corresponds to partial gate $g_j$}
            \For{$d\in\{\mathsf{CT},\mathsf{TC}\}$}
                \State $D_d\!\gets \!D$ with decision on $g_j$ restricted to $d$
                \If{$M=M'\stp (I_{2^t}\otimes M_{d})$ for some $M'$}
                    \State $\mathcal E.\mathrm{push}(\Phi',M',D_d)$
                \Else
                    \State remove $D_d$ from $\mathcal D$
                \EndIf
        \EndFor
        \EndIf
    \EndWhile
    \State \Return $\mathcal D$
\EndFunction
\end{algorithmic}
\end{algorithm}

\subsection{Factorization}\label{subsec:factorization_and_solver}

At the core of our circuit solver is the matrix factorization module, which is illustrated in more detail in Alg.~\ref{alg:factor}. For a fixed equation $\Phi_{T,i}=M_i$, the factorizer takes the rightmost factor of $\Phi_{T,i}$, say $I_{2^t}\ts N$, and attempts to factorize $M_i$ as $M_i'\stp(I_{2^t}\ts N)$, splitting that factor apart from the equation and then removing it. We refer to this operation as one \emph{factorization step}. When $N=M_w$ or $M_r$, such factorization is always feasible; when $N=M_{g_j}^{(\ell)/(h)}$ corresponds to an undetermined partial gate, the factorizer considers the two possible directions, $\mathsf{CT}$ and $\mathsf{TC}$, and attempts the factorization under each assignment. If an attempt fails, the corresponding gate decision is eliminated.

Rather than processing the equations in a fixed order, the factorizer maintains them in a pool and dynamically selects which equation to process next. For each equation, it estimates the factorization cost required to reach the next opportunity for reducing a gate decision, and prioritizes the equation with the lowest estimated cost. The selected equation is popped from the pool and advanced by one factorization step. By selecting equations smartly with heuristics, the factorizer eliminates infeasible gate decisions faster and terminates the overall solving process earlier, as shown in Fig.~\ref{fig:gate_dci_reduction}.

After each factorization step, the corresponding gate decisions are updated, and the residual equation is put back to the pool if further factors remain. If $N=M_w$ or $M_r$, no gate decision is affected. If $N=M_{g_j}^{(\ell)/(h)}$, the feasible directions of $g_j$ are restricted according to the factorization result; when both $\mathsf{CT}$ and $\mathsf{TC}$ remain feasible, the constraint may branch into two cases corresponding to the two assignments. The process then continues, until all equality constraints in the equation pool have been exhausted, indicating a clearance of constraints.

Conceptually, STP factorization serves as the local inference primitive for testing and eliminating gate decisions, similar to theory reasoning in SMT solving. At the global level, the solver explores gate decisions through branching and recursive elimination, resembling the classical DPLL algorithm used in solving SAT problems~\cite{dpll1,dpll2}.

More specifically, we next describe the factorization test used in each step. We use the following criterion to determine whether a rightmost STP factor can be extracted:
\begin{theorem}[STP Factorization Criterion]\label{thm:factorization}
    Suppose $M=\big[M_1 \, M_2 \, \cdots M_L\big]$ where $M_i$ are block matrices of the same size, and suppose $R=\big[\v e_{i_1} \,\v e_{i_2}\, \cdots\, \v e_{i_L}\big]$ is a rightmost factor where $\v e_i$s are the standard orthonormal basis. Then the following two conditions are equivalent:
    \begin{enumerate}
        \item Factorization $M=M'\stp R$ is feasible for some $M'$,
        \item $i_k=i_{k'}$ always implies $M_{i_k}=M_{i_{k'}}$ for any $k,k'\in[L]$.
    \end{enumerate}
\end{theorem}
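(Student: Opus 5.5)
Throughout, the index $i$ in $M_{i_k}$ is read as labelling blocks of $M$ by the row $i_k$ of $R$ to which they will be sent. In other words, condition 2 should say that $i_k=i_{k'}$ forces the $k$-th and $k'$-th blocks to agree. I will write the $k$-th block as $M_k$, so condition 2 reads: $i_k=i_{k'}\Rightarrow M_k=M_{k'}$.

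The plan is first to reduce the STP to an ordinary matrix product. Let each block $M_k$ have size $m\times s$, so $M$ is $m\times Ls$. The factor $R$ has $L$ columns, and its number of rows $p$ is the size of its basis vectors. For $M'\stp R$ to have $Ls$ columns, the only dimensionally consistent choice is that $M'$ has $ps$ columns. This is the situation in the solver, where $R$ stands for $I_{2^t}\ts N$ after reindexing, with trailing identities. By the definition of STP with $K=\mathrm{lcm}(ps,p)=ps$, we get $M'\stp R=M'(R\ts I_s)$.

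Next I compute this product blockwise. Partition $M'=[M'_1\,\cdots\,M'_p]$ into $m\times s$ blocks. The $k$-th block column of $R\ts I_s$ is $\v e_{i_k}\ts I_s$, which picks out block $i_k$. Hence
\begin{equation}
M'(R\ts I_s)=\big[M'_{i_1}\,M'_{i_2}\,\cdots\,M'_{i_L}\big].
\end{equation}
So the factorization $M=M'\stp R$ holds if and only if $M_k=M'_{i_k}$ for every $k\in[L]$.

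With this formula both directions are immediate.

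\textbf{(1)$\Rightarrow$(2).} If $M_k=M'_{i_k}$ for all $k$ and $i_k=i_{k'}$, then $M_k=M'_{i_k}=M'_{i_{k'}}=M_{k'}$.

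\textbf{(2)$\Rightarrow$(1).} Build $M'$ explicitly. For each row index $j$ in the image $\{i_1,\dots,i_L\}$, set $M'_j:=M_k$ for any $k$ with $i_k=j$; condition 2 makes this well defined. For each $j$ not in the image, set $M'_j$ to be an arbitrary block, say zero. Then $M'_{i_k}=M_k$ for all $k$, so $M=M'\stp R$. This also shows that $M'$ is unique exactly on the image rows.

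The main obstacle is the dimension bookkeeping in the first step. One must justify that $M'\stp R$ really reduces to $M'(R\ts I_s)$ for the factor shapes actually occurring in Alg.~\ref{alg:factor}. For factors of the form $I_{2^t}\ts N$ with $N\in\{M_w,M_r,M_{d}\}$, each column is a standard basis vector, because these are structural matrices of logical maps; this puts them in the form required by the statement. The remaining check is that the column count of $M$ equals $L$ times the block width $s$, which follows from the canonical-form dimensions, $M_i\in\Ftwo^{2\times 2^n}$. Once the selection formula for $M'(R\ts I_s)$ is in place, the rest is mechanical.
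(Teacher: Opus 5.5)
Your proof is correct. There is no competing argument in the paper to compare it with: Theorem~\ref{thm:factorization} is stated without proof, as the generic criterion carried over from the earlier STP-based synthesis work, and is then used as a black box in the appendix proof of Theorem~\ref{theorem2}.

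You are right to repair condition~2. As printed, ``$i_k=i_{k'}$ implies $M_{i_k}=M_{i_{k'}}$'' is vacuous. The paper's own $I_2\otimes M_\oplus$ example ($M_1=M_4$, $M_2=M_3$, $M_5=M_8$, $M_6=M_7$) confirms that the intended condition is $M_k=M_{k'}$.

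Your route reduces $M'\ltimes R$ to $M'(R\otimes I_s)$ and reads off $M'(R\otimes I_s)=[M'_{i_1}\,\cdots\,M'_{i_L}]$. This gives more than the bare equivalence: it describes every admissible $M'$, forced on the blocks indexed by $\{i_1,\dots,i_L\}$ and free on the rest. That freedom is where the ``indefinite variables'' come from that the paper's factorizer heuristic attributes to $M_r$ steps, since the columns of $I_{2^t}\otimes M_r$ reach only half of its rows.

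One small tightening: your claim that $M'$ must have $ps$ columns silently assumes $M'$ has as many rows as $M$. If $M'$ is $m'\times c$ with $m'$ unconstrained, matching column counts still forces $\mathrm{lcm}(c,p)=ps$. Then $M'\ltimes R=(M'\otimes I_{ps/c})(R\otimes I_s)$. Replacing $M'$ by the $m\times ps$ matrix $M'\otimes I_{ps/c}$ returns you to your setting, so the direction (1)$\Rightarrow$(2) is unaffected. In the solver, $M'$ is a two-row structural matrix anyway.
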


% Next we show two concrete examples of the factorization process.
% \begin{example}
%     If
%     \begin{equation}
%         \setlength{\arraycolsep}{3pt} % 缩小列间距
%         M=M'\stp M_r=\begin{bmatrix}
%         M_a & M_b \\
%         \overline{M}_a & \overline{M}_b \\
%     \end{bmatrix},
%     \end{equation} then $M'$ can be factorized as
%     \begin{equation}
%         M'=\begin{bmatrix}
%         M_a & * & * & M_b \\
%         \overline{M}_a & * & * & \overline{M}_b \\
%     \end{bmatrix},
%     \end{equation}
%     where $*$ denotes variable matrices with the same size as $M_a, M_b$ that can take arbitrary values, and $\overline{M}_i$ denotes the entrywise negation of $M_i$.
% \end{example}

\begin{example}
    Denote $M_{\oplus}=\left[\begin{smallmatrix}
        1 & 0 & 0 & 1 \\
        0 & 1 & 1 & 0 \\
    \end{smallmatrix}\right]$ and suppose $M=\big[M_1\, M_2\,\cdots\, M_8\big]$. Then the factorization
    \begin{equation}
        M=M'\stp (I_2\ts M_\oplus)=M'\stp \left[\begin{smallmatrix}
            1 & 0 & 0 & 1 &  &  &  &  \\
            0 & 1 & 1 & 0 &  &  &  &  \\
             &  &  &  & 1 & 0 & 0 & 1 \\
             &  &  &  & 0 & 1 & 1 & 0 \\
        \end{smallmatrix}\right]
    \end{equation}
    is feasible if and only if:
    \begin{align}
        M_1=M_4,\ M_2=M_3,\\
        M_5=M_8,\ M_6=M_7.
    \end{align}
\end{example}

Furthermore, we identify a structural property specific to the parity functions induced by CNOT circuits that considerably simplifies STP factorization. For structural matrix of a parity Boolean function, a factorization with tail factor $I_{2^t}\otimes M_{\oplus}$ can be verified by inspecting only the first $1/2^t$ portion of the matrix. This is substantially cheaper than applying the generic factorization criterion (Theorem~\ref{thm:factorization}), which is used in original STP methods~\cite{pan2023Exact,pan2023Semitensor} and needs to inspect the entire structural matrix. Since factors of the form $I_{2^t}\otimes(\cdot)$ appear repeatedly in the canonical STP expressions of CNOT circuits, the simplification directly reduces the computational cost of our circuit solver.

\begin{theorem}
\label{theorem2}
    Suppose $\Phi(\v x)=p_{\v a}(\v x)=\bigoplus_{i=1}^n a_ix_i$ is a parity Boolean function, and $M_{p_{\v a}}=\big[M_1\,M_2\,\cdots\,M_{4\cdot 2^t}\big]$ is the structural matrix corresponding to $p_{\v a}$. Then factorization $M_{p_{\v a}}=M'\stp (I_{2^t}\otimes M_{\oplus})$ is feasible for $t\le n-2$ if and only if $M_1=M_4$ and $M_2=M_3$.
\end{theorem}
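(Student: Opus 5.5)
The plan is to reduce the statement to Theorem~\ref{thm:factorization} and then show that the block pattern of $M_{p_{\v a}}$ repeats across the $2^t$ ``prefix'' segments up to a row swap. Here $M_{p_{\v a}}$ is $2\times 2^n$, with columns indexed by $(x_1,\dots,x_n)$ in lexicographic order ($x_1$ most significant). The factor $I_{2^t}\ts M_\oplus$ acts on the variables $x_{t+1},x_{t+2}$, which is why $t\le n-2$ is needed. In the STP sense it is padded to $I_{2^t}\ts M_\oplus\ts I_{2^{n-t-2}}$. Accordingly, the blocks $M_k$ have width $2^{n-t-2}$, and block $M_{4j+r}$ ($0\le j<2^t$, $r\in\{1,2,3,4\}$) collects the columns with prefix $(x_1,\dots,x_t)=\mathrm{bin}(j)$ and $(x_{t+1},x_{t+2})=\mathrm{bin}(r-1)$, as a function of the suffix $\v y=(x_{t+3},\dots,x_n)$.

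\textbf{Step 1 (generic criterion).} The columns of $I_{2^t}\ts M_\oplus$ are standard basis vectors, so Theorem~\ref{thm:factorization} applies (after absorbing the trailing identity into the block width). As in the example following Theorem~\ref{thm:factorization}, it shows that the factorization is feasible iff
\begin{equation*}
M_{4j+1}=M_{4j+4}\quad\text{and}\quad M_{4j+2}=M_{4j+3}\qquad\text{for all } 0\le j<2^t .
\end{equation*}
The ``only if'' direction of the theorem is then immediate by taking $j=0$.

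\textbf{Step 2 (parity structure).} Write
\begin{equation*}
p_{\v a}(\v x)=c(j)\oplus a_{t+1}x_{t+1}\oplus a_{t+2}x_{t+2}\oplus s(\v y),
\end{equation*}
where $c(j)=\bigoplus_{i\le t}a_ix_i$ depends only on the prefix and $s$ only on the suffix. Each column of a block is $\ket{0}$ or $\ket{1}$ according to the value of $p_{\v a}$. Hence $M_{4j+r}=P^{c(j)}M_r$, where $P=\left[\begin{smallmatrix}0&1\\1&0\end{smallmatrix}\right]$ swaps the two rows. The key point is that the same left multiplier $P^{c(j)}$ is applied to all four blocks within segment $j$.

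\textbf{Step 3 (transfer).} If $M_1=M_4$ and $M_2=M_3$, then $M_{4j+1}=P^{c(j)}M_1=P^{c(j)}M_4=M_{4j+4}$, and likewise $M_{4j+2}=M_{4j+3}$. So every condition from Step~1 holds, which gives sufficiency.

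The only delicate point is the index bookkeeping in Step~2. One must check that the STP padding indeed makes the blocks correspond to fixed prefix values, with $x_{t+1},x_{t+2}$ ranging over the four sub-blocks in the order $00,01,10,11$, and that the vector encoding~\eqref{eq:tf} turns XOR with a constant into the row swap $P$. Once this is verified, the rest is mechanical.
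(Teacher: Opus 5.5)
Your proposal is correct and follows essentially the same route as the paper: reduce to the generic criterion (Theorem~\ref{thm:factorization}), then propagate $M_1=M_4$, $M_2=M_3$ from the first segment to all $2^t$ segments, using the fact that each segment is either a copy or the negation of the first. Your identity $M_{4j+r}=P^{c(j)}M_r$ is the paper's ``recursive symmetry'' lemma (every $t$-subdivided submatrix equals $m$ or $\overline{m}$); you derive it directly from the prefix/suffix decomposition of $p_{\v a}$, whereas the paper obtains it by iterating the one-variable symmetry and only sketches that step.
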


Due to space limitations, the detailed proof of this theorem is postponed to Appendix A.

\begin{figure}[t]
    \centering
    \includegraphics[width=0.49\textwidth]{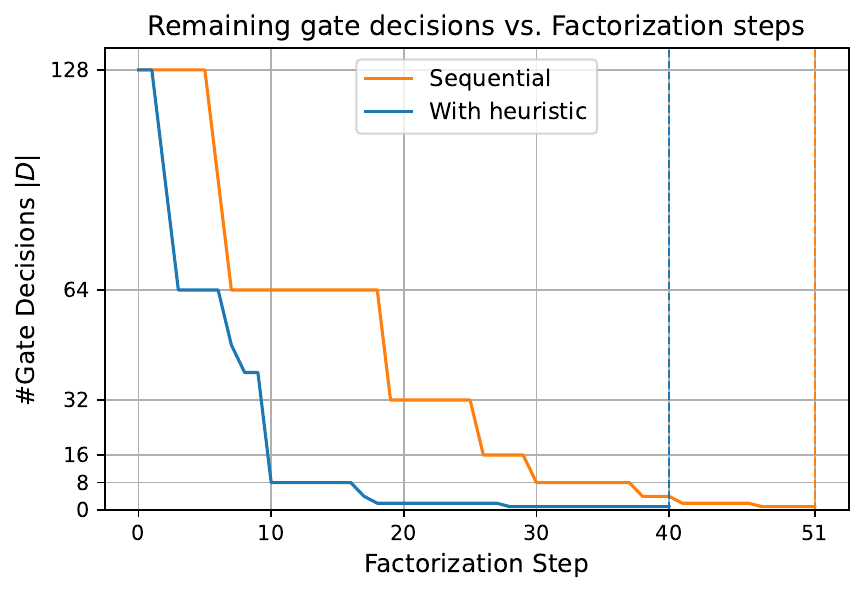}
    \caption{Number of remaining feasible gate decisions as factorization step increases, with different equality constraint selection strategies. The figure is plotted using a 7-gate topology that realizes a 5-qubit synthesis target. The behavior of the factorizer on other topologies and synthesis targets exhibits similar trends.}
    \label{fig:gate_dci_reduction}
\end{figure}

\subsection{Topology-Pruning Techniques}\label{subsec:topo-prune}
For $n$-qubit topologies with $g$ partial gates, the number of different topologies grows exponentially. For example, for $5$ qubits, there are roughly $\binom{5}{2}^6=1,000,000$ different $6$-gate topologies. Exhaustively enumerating this space is expensive even before circuit solving. Efficient exact synthesis therefore requires pruning infeasible topologies as early as possible. We next introduce the pruning strategies used in our algorithm.
% Throughout the section, we denote the input qubit variables as a vector
% \begin{equation}
%     \v x=[x_1,x_2,\cdots,x_n]^\top
% \end{equation}
% and
% \begin{equation}
%     \v x'=[x_1',x_2',\cdots,x_n']^\top=A\v x
% \end{equation}
% as the vector of desired output qubits.

\paragraph{Prune by variable mismatch} We can prune topologies by counting output qubit mismatches between the current topology and the synthesis target. Consider, for example, implementing $A=\left[\begin{smallmatrix}
    1 & 0 & 0 \\
    1 & 1 & 0 \\
    0 & 1 & 1 \\
\end{smallmatrix}\right]$ with the following topology $T$:
\begin{figure}[H]
\centering
\vspace{-.8em}
\begin{tikzpicture}
  \begin{yquant}[register/minimum height=4mm, operator/separation=4mm]
    qubit {$x_{\protect\the\numexpr\idx+1}$} q[3];
    %[control style={draw,fill=white,shape=yquant-rectangle,radius=3pt}, operator style={draw,fill=white,shape=yquant-rectangle,radius=3pt}]
    %cnot q[1]|q[2];
    [control style={draw,fill=white,shape=yquant-rectangle,radius=3pt}, operator style={draw,fill=white,shape=yquant-rectangle,radius=3pt}]
    cnot q[0]|q[1];
    output {$\{x_1,x_2\}=\supp(T,1)$} q[0];
    output {$\{x_1,x_2\}=\supp(T,2)$} q[1];
    output {$\{x_3\}\quad\,\, =\supp(T,3)$} q[2];
  \end{yquant}
\end{tikzpicture}
\vspace{-1em}
\end{figure}
Then we see the first and second output qubits may depend on $x_1$ and $x_2$, whereas the third output qubit can only depend on $x_3$. We refer to these sets of potentially relevant input variables as the \emph{supports} of the corresponding output qubits of $T$.

The target transformation requires $x_3'=x_2\oplus x_3$, but $x_2\notin \supp(T,3)$. Therefore, regardless of how the partial gates are instantiated, at least one additional CNOT targeting the third qubit is required. We call such a discrepancy a \emph{variable mismatch}. More generally, if the number of mismatched output qubits exceeds the number of gates remaining under the prescribed gate bound $\#g$, the topology cannot be extended to a valid solution and can be pruned early.

Formally, we recursively define the support of a topology by
\begin{align}
    \supp(T_\epsilon,i)&:=\{x_i\} \\
    \supp(T: G_{ij},i\text{ or }j)&:=\supp(T,i)\cup\supp(T,j)\\
    \supp(T: G_{ij},k\notin\{i,j\})&:=\supp(T,k)
\end{align}
where $T_\epsilon$ is the empty topology and $G_{ij}$ is the partial gate acting on qubits $i$ and $j$. Because the control and target of a partial gate have not yet been determined, the union is assigned to both  qubits, yielding an over-approximation of the input variables that may influence an output qubit. A partial topology $T$ can then be pruned if
\begin{equation}
    \#[\text{var. mismatch}]=\big|\{i:\mathrm{var}(x'_i)\nsubseteq\supp(T,i)\}\big|>\#g-|T|,
\end{equation}
where $\mathrm{var}(\v x')$ denotes the input qubit variables contained in $\v x'$. On the right-hand side, $\#g$ denotes the target gate number and $|T|$ is the number of gates in the topology $T$, which together give the count of remaining gates. The rule is valid because each subsequent partial gate, and hence each instantiated CNOT gate, can fix at most one such variable mismatch.

\paragraph{Prune by information exchange lower bound} A second pruning rule follows from the amount of information that must cross a qubit bipartition. For a qubit partition $\{1,2,\cdots,n\}=Q_1\uplus Q_2$\footnote{"$\uplus$" means \textit{disjoint union}, which denotes $Q_1\cup Q_2$ and, in the meantime, indicates $Q_1\cap Q_2=\varnothing$.}, we derive a lower bound on the number of CNOT gates that must act between $Q_1$ and $Q_2$, and hence on the number of cross-partition partial gates required in a valid topology.

Consider, for example, the linear function
\begin{align}
    x_1'&=x_1,\,\,\, x_2'=x_2, \\
    x_3'&=x_3\oplus x_1\\
    x_4'&=x_4\oplus x_1\oplus x_2,
\end{align}
together with partition $Q_1=\{1,2\}$ and $Q_2=\{3,4\}$. The outputs in $Q_2$ require two linearly independent parities from $Q_1$, namely $x_1$ and $x_1\oplus x_2$. Since each cross-partition CNOT can transfer at most one independent parity, at least two CNOT gates are necessary to implement the function.

Formally, for any partition $Q_1\uplus Q_2=[n]$, let $A_{Q_1Q_2}$ denote the submatrix of $A$ whose rows and columns are indexed by $Q_1$ and $Q_2$, respectively:
\begin{equation}
    (A_{Q_1Q_2})_{ij}=A_{q_iq_j}, 1\le i\le |Q_1|,1\le j\le |Q_2|,
\end{equation}
where $q_i$ and $q_j$ are the $i$-th and $j$-th qubit indices in $Q_1$ and $Q_2$, respectively. Any CNOT circuit implementing $A$ must then satisfy
\begin{multline}
    \#[\text{partial gate between } Q_1 \text{ and } Q_2\text{ in } T] \\ \ge \rank A_{Q_1Q_2}+\rank A_{Q_2Q_1},\forall \, Q_1\uplus Q_2=[n].
\end{multline}
The two rank terms quantify the independent information that must be exchanged across the partition in the two directions. Therefore, if a topology $T$ contains fewer cross-partition partial gates than this lower bound for any bipartition $Q_1\uplus Q_2=[n]$, it cannot implement $A$ and may be safely pruned.

To demonstrate the effectiveness of our pruning strategy, consider a simple case, where we consider all 6-gate topologies on 5 qubits\footnote{A random 5-qubit linear function $A$ is used in the experiment to derive the concrete numbers. Pruning strategies on other linear functions give similar results.}. Without pruning, there are $\binom{5}{2}^6=1,000,000$ candidate topologies. Applying the variable mismatch and information exchange pruning strategies individually reduces this number to approximately $28{,}061$ and $13{,}116$, respectively. When both strategies are applied, only $4{,}123$ candidate topologies remain, corresponding to a $243\times$ reduction in the search space and substantially decreasing the cost of the subsequent solving process.

\begin{table}[tb]
    \centering
    \begin{tabular}{cccc}
    \toprule
    \begin{tabular}[c]{@{}l@{}}Prune by\\ Var. Mismatch\end{tabular} & \begin{tabular}[c]{@{}l@{}}Prune by\\ Info. Exchange\end{tabular} & \#Topologies & Remaining Ratio \\
    \midrule
    No             & No                & 1,000,000      & 100\%       \\
    Yes            & No                & 28,061        & 2.8\%     \\
    No             & Yes               & 13,116         & 1.3\%      \\
    Yes            & Yes               & 4,123          & 0.4\%      \\
    \bottomrule
    \end{tabular}
    \caption{Number of 6-gate topologies that need to be solved on a 5-qubit random case} %Topologies are dramatically reduced using our topology solving and pruning techniques.}
    \label{table:topo_prune}
\end{table}

\subsection{Analysis of Partial Gate Width}\label{subsec:multi-qubit_partial_gate}
So far, our synthesis framework has been formulated using two-qubit partial gates. As introduced in Sec.~\ref{para:pre_partial_gates_topo}, partial gates may more generally span $k\geq 2$ candidate qubits. We now revisit this design choice and ask how the partial gate width $k$ affects the complexity of topology-and-solver-based exact synthesis algorithms.

Next we give a preliminary theory on this issue by minimizing the overall synthesis search space required to fully determine a CNOT circuit. To isolate the effect of $k$, we assume that the solver eliminates candidate decisions at a fixed average rate. Under this assumption, a smaller space of overall decision space is naturally preferable and decreases complexity.

Within our framework, a circuit is specified in two stages: first by selecting a topology and then by instantiating its partial gates through the solver. For a fixed $n$-qubit synthesis target requiring $g$ gates, the total number of joint choices is therefore:
\begin{align}
    f_{n,g}(k)&=\#\text{topologies}\times \#\text{gate decisions}\\
    &=\binom{n}{k}^g\times [k(k-1)]^g\\
    &=\left[\binom{n-2}{k-2}\cdot n(n-1)\right]^g,
\end{align}
which minimizes at $k=2$ or $n$, and maximizes at $k\sim n/2+1$ for fixed $n$ and $g$. The suboptimal values at $2<k<n$ reflects the redundancy introduced by intermediate partial gate widths: the same concrete CNOT gate can be instantiated from $\binom{n-2}{k-2}$ different $k$-qubit partial gate locations, thereby inflating the overall synthesis search space.

Although $k=2$ and $k=n$ yield the same minimum decision-space size, they lead to different synthesis algorithm architectures. With $k=2$, the choice of interacting qubits is handled by topology enumeration, while the solver determines only their control--target direction. This decomposition exposes independent topologies that can be enumerated and solved in parallel. On the other hand, the $k=n$ choice produces a single topology and transfers all workload to the solver, and is more complicated to parallelize. We therefore adopt $k=2$ throughout this paper, and leave the $k=n$ architecture for future work.

\section{Extension to Phase Polynomial Circuits}\label{sec:ext_to_phasepoly}

\begin{algorithm}[t]
	\caption{Exact synthesis algorithm for phase polynomial circuits}
    \label{alg:exact_syn_phasepoly}
	\begin{algorithmic}[1]
		\Require A linear parity matrix $A\in\mathrm{GL}(n,2)$ and a phase polynomial $p(\v x)$.
		\Ensure Optimal phase polynomial circuit(s) implementing the sum-over-path representation $(p,A)$.
		\Function {ExactSynPhasePolynomialSTP}{$p$, $A$}
			\State $ \#g\gets 0$ \Comment{Target gate number}
            \State $S\gets \varnothing$ \Comment{Solution set}
            \State $\mathcal P\gets\{\text{All parity terms $p_{\v a}$ in } p\}$
            \While{$S= \varnothing$}
                \State \Comment{Pruning adapted for phase polynomials}
                \State $\mathcal T_{\#g,A}\gets\textsc{EnumerateTopology}(\#g, A)$
                \For{$ T\in \mathcal T_{\#g, A}$} \Comment{Parallelizable}
                    \State $\mathcal D\gets \textsc{Solve}(T,A)$
                    \State $S_{\rm cand}\gets \textsc{Instantiate}(T, \mathcal D)$
                    \For{$C\in S_{\rm cand}$}
                        \If{$C$ reaches all parities in $\mathcal P$}
                            \State Insert $R_z(\theta)$ gates to $C$
                            \State \Comment{Insert phase gates where corresponding parities appear}
                            \State $S\gets S\cup\{C\}$
                        \EndIf
                    \EndFor
                \EndFor
                \State $\#g\gets \#g +1$
            \EndWhile
            \State \Return $S$
		\EndFunction
	\end{algorithmic}
\end{algorithm}

We next extend our framework to phase polynomial circuits composed of CNOT and $R_z$ gates. Following Amy et al.~\cite{amy2018CNOTcomplexity}, such circuits can be synthesized through a skeleton-and-checkpoint approach: a CNOT skeleton is first synthesized so that every required parity appears on some wire during its execution; then the corresponding $R_z$ rotations are inserted where those parities are reached. Exact  synthesis of phase polynomial circuits can therefore be reduced to finding a minimal CNOT skeleton that satisfies both the final linear transformation and all intermediate parity checkpoint constraints.

As mentioned in Sec.~\ref{sec:pre_qc}, a phase polynomial target is specified by a pair $(p,A)$, where $A\in\GL(n,2)$ is the final parity matrix and
\begin{equation}
    p(\v x)=\sum_{\v a\in\Ftwo^n}\theta_{\v a}p_{\v a}(\v x),
    \quad
    p_{\v a}(\v x)=a_1x_1\oplus\cdots\oplus a_nx_n
\end{equation}
is the phase polynomial, where each $\v a$ with $\theta_{\v a}\neq0$ defines a \emph{parity checkpoint} $p_{\v a}$ that needs to be reached. At any prefix of a CNOT skeleton, every wire carries a parity $p_{\v b}(\v x)$, with $\v b$ given by the corresponding row of the parity matrix of the circuit prefix. A checkpoint $p_{\v a}$ is reached if, at some circuit prefix and some wire, the parity carried by the wire is equal to $p_{\v a}$.

The pseudocode of the extended algorithm is shown in Alg.~\ref{alg:exact_syn_phasepoly}. It generates CNOT skeletons satisfying the final constraint $A$ and retains only those that reach all required parity checkpoints. For every surviving skeleton, the $R_z$ rotation specified by $\theta_{\v a}$ is inserted where checkpoint $\v a$ is reached.

We also adapt the topology pruning strategies to account for the checkpoint constraints. For each partial topology, we estimate the minimum number of additional gates required before a given checkpoint parity can be reached. Let $T$ be a partial topology, and for a required parity $p_{\v a}$, let $\supp(p_{\v a}):=\{i:a_i=1\}$. The minimum number of additional gates needed to make $p_{\v a}$ available on wire $q$ can be lower bounded by the following set-cover bound:
\begin{multline}
    \mathrm{cover}_q(T, p_{\v a})=\min\Big\{\,|Q|:\;Q\subseteq[n]-\{q\},\\
        \supp(p_{\v a})
        \subseteq\bigcup_{q'\in Q\cup\{q\}}\supp(T,q')\,\Big\}.
\end{multline}
Since a checkpoint may be reached on any wire, we take its minimum over all possible wires $q$ ($\min_{q\in [n]}$). Then we can prune a topology if the number of remaining gates is not enough to reach any required parity checkpoint ($\max_{p_{\v a}\in p}$):
\begin{equation}
    \max_{p_{\v a}\in p} \min_{q\in{[n]}} \mathrm{cover}_q(T,p_{\v a})>\#g -|T|.
\end{equation}
The original variable mismatch and information exchange lower bound pruning techniques are also retained alongside this checkpoint-specific rule.

%A tighter integration for parity checkpoint constraints would encode each checkpoint as an existential disjunction over possible places that it may occur in the circuit inside the circuit solver; we leave this alternate approach for future work.

\section{Experiments}
\label{sec:experiments}
In this section, we test our STP-based method on both random and real-world circuits to examine its performance, and compare it against existing exact synthesis approaches. Our algorithm is implemented in Rust\footnote{The source code is available upon reasonable request.}, and is parallelized via the \texttt{Rayon} package.   All experiments were run on a Linux server with Intel(R) Xeon(R) Platinum 8253 CPUs at 2.20 GHz and 3 TB of memory.

We use SAT-based exact synthesis as the primary baseline, as it is more scalable and flexible than database-based methods~\cite{CNOT_opt_UCL,CNOT_opt_group_theoretic} for the problem settings considered here. Specifically, we use Qiskit-SAT~\cite{qiskit-sat}, a Qiskit plugin for exact Clifford and CNOT synthesis based on the Z3~\cite{Z3} solver, as the representative baseline from the quantum computing community. For parallel experiments, we additionally replace its backend with the multi-threaded Gimsatul solver~\cite{fleury2022gimsatul}, an open-source parallel SAT solver that is easy to integrate and has demonstrated performance competitive with the winner of the SAT Competition 2021 parallel track. Unless otherwise stated, we will use ``\qiskitsat{}'' to refer to its original configuration with Z3 backend, and use ``Gimsatul'' to refer to the corresponding configuration using the multi-threaded Gimsatul backend.

%Existing exact CNOT-synthesis techniques can be broadly divided into database-based methods~\cite{CNOT_opt_UCL,CNOT_opt_group_theoretic} and solver-based methods~\cite{shaik2024Optimal,qiskit-sat}. Database-based methods are typically limited to small qubit counts because of their substantial memory requirements and often exploit equivalence under qubit permutations to reduce the size of the circuit library, whereas our method does not take permutation equivalence into consideration. Meanwhile, the SAT-based method has stood out as the most effective and flexible method among the solver-based methods, and has been widely used in classical exact synthesis. We therefore focus on SAT-based synthesis as the primary baseline. Specifically, we take the Qiskit implementation of the SAT-based method~\cite{qiskit-sat}, denoted as \qiskitsat{}, as a representative SAT-based exact-synthesis implementation developed for the quantum-computing research community. For experiments involving parallelism, we additionally use Gimsatul~\cite{fleury2022gimsatul} as a representative parallel SAT solver and integrate it as the backend of \qiskitsat{}. Unless stated otherwise, \qiskitsat{} refers to the original configuration with its default single-threaded Z3 backend, whereas ``Gimsatul'' refers to the corresponding configuration using the multithreaded Gimsatul backend.

\subsection{Experiment: Random CNOT Circuits}\label{sec:exp_rand_circ}

\begin{figure}
    \centering
    \includegraphics[width=0.99\linewidth]{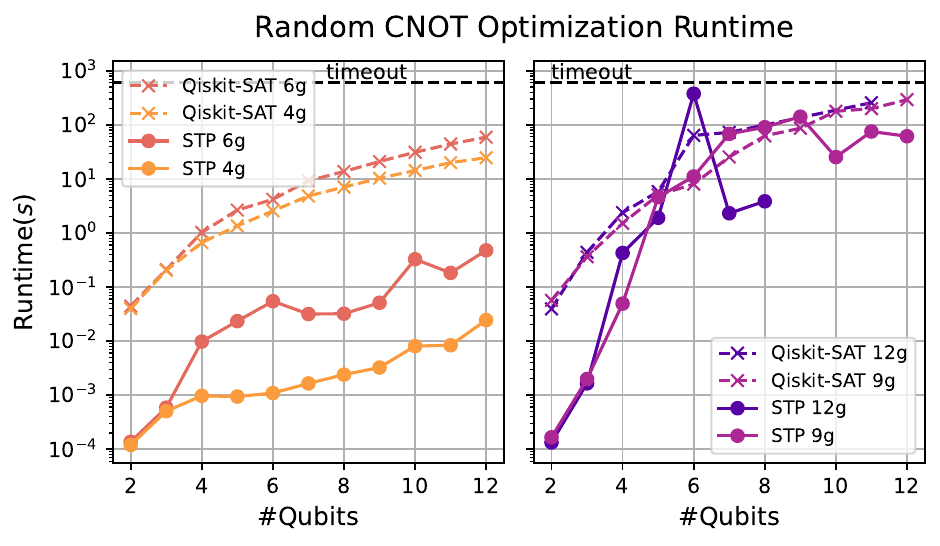}

    \caption{Runtime of STP against \qiskitsat{} on random CNOT circuits. For each data point, 10 random circuits are generated and tested. The timeout value is set to 600 seconds.}
    \label{fig:rand_circ_scalability}
\end{figure}

We first evaluate our algorithm on randomly generated linear functions. For each qubit number $n$ and gate count $g$, we generate 10 random CNOT circuits and use their resulting parity matrices $A_{\rm rand}$ as synthesis targets. The larger $g$, the more complicated $A_{\rm rand}$ is. We compare STP with \qiskitsat{} over a range of $(n,g)$ configurations, with the results shown in Fig.~\ref{fig:rand_circ_scalability}. For each configuration, we report the average runtime over the 10 instances, with a timeout of 600 seconds per instance.

As shown in Fig.~\ref{fig:rand_circ_scalability}, STP substantially outperforms \qiskitsat{} for configurations with $g\leq 6$ and remains competitive on more complicated instances with $g\approx 9$. When the number of qubits is small and the target linear function is moderately difficult, STP is typically $2$--$3$ orders of magnitude faster than \qiskitsat{}, corresponding to a speedup of approximately $100$--$1000\times$.

This advantage primarily comes from the lower overhead of our circuit-specific formulation. Even for small instances, a SAT-based approach typically introduce hundreds of Boolean variables and thousands of clauses to encode the synthesis problem. In contrast, the STP formulation is more closely tailored to the algebraic structure of CNOT circuits and therefore admits a more compact representation. Moreover, SAT-based methods invoke an external general-purpose SAT solver, whereas STP integrates the search procedure natively and avoids additional solver-invocation overhead.

As $g$ increases and the target linear functions become more difficult, the performance gap gradually narrows. In this regime, encoding and invocation overhead account for a smaller fraction of the total runtime, while the underlying combinatorial search gradually dominates. Mature SAT solvers benefit from decades of innovation and optimization, and have therefore been highly effective on such difficult instances. %This trend is also consistent with the NP-hard nature of exact CNOT-circuit synthesis: a circuit-specific formulation can substantially reduce practical overhead, but does not eliminate the worst-case combinatorial complexity. We therefore envision STP as a fast circuit-resynthesis kernel for small and moderately difficult subproblems within practical quantum-circuit optimization workflows, complementing SAT-based methods that are better suited to more challenging instances.
Thus, the main advantage of STP lies in efficiently solving small and moderately difficult synthesis instances, making it well suited as a fast local re-synthesis kernel that complements SAT-based methods on harder cases.

\begin{figure}
    \centering
    \includegraphics[width=0.99\linewidth]{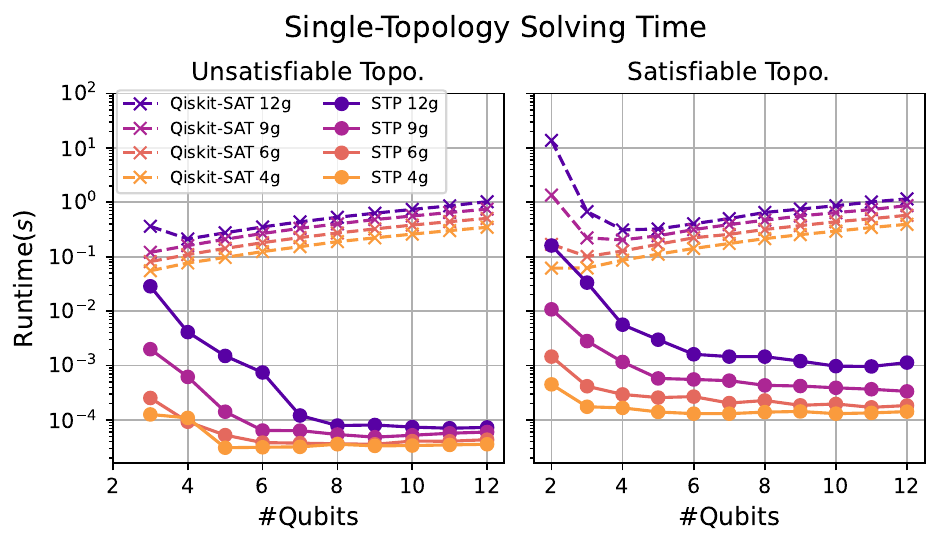}

    \caption{Time required to solve random single topologies by STP circuit solver and Qiskit-SAT. \textbf{Left}: time required for unsatisfiable topologies, for which the sampled topology cannot realize the synthesis target; \textbf{Right}: time required for satisfiable topologies, for which at least one gate direction realization exists.}
    \label{fig:single_topo_solving}
\end{figure}

\subsection{Experiment: Single Topology Solving}\label{sec:exp_single_topo}
To better understand the source of STP’s performance advantage over Qiskit-SAT, we further evaluate the time required by each method to solve a \emph{single topology}. For each qubit–-gate configuration $(n,g)$, we randomly generate topology--target pairs and measure the solving times of the STP circuit solver and Qiskit-SAT. We consider two classes of instances: \emph{unsatisfiable topologies}, for which the sampled topology cannot realize the synthesis target, and \emph{satisfiable topologies}, for which at least one realization exists. For each configuration and satisfiability class, the solving time is averaged over 10 randomly generated cases. The results are presented in Fig.~\ref{fig:single_topo_solving}.

As shown in Fig.~\ref{fig:single_topo_solving}, the STP circuit solver consistently outperforms Qiskit-SAT. It achieves speedups above $500\times$ in most configurations, with a maximum speedup of $1.4\times 10^4$. Although satisfiable topologies generally take longer for STP to solve than unsatisfiable ones, STP maintains a substantial advantage over Qiskit-SAT in both classes. This efficiency the single topology level lays the foundation for the end-to-end performance advantage of STP-based exact synthesis reported in Sec.~\ref{sec:exp_rand_circ}.

Interestingly, the solving time initially decreases as the number of qubits increases. This is because, with the gate count fixed, increasing the circuit width makes the sampled topologies sparser and reduces interactions among partial gates, thereby lowering the combinatorial complexity. As the number of qubits grows further, however, the runtime of Qiskit-SAT begins to increase steadily: additional qubits introduce more Boolean variables and constraints, increasing both encoding and solving overhead. On the contrary, STP exhibits much weaker dependence on circuit width. Although each topology is also encoded as STP expressions in the STP circuit solver, the encoding is more closely tailored to the algebraic structure of the circuit, and its overhead barely grows as the number of qubits increases.

\begin{figure}
    \centering
    \includegraphics[width=0.99\linewidth]{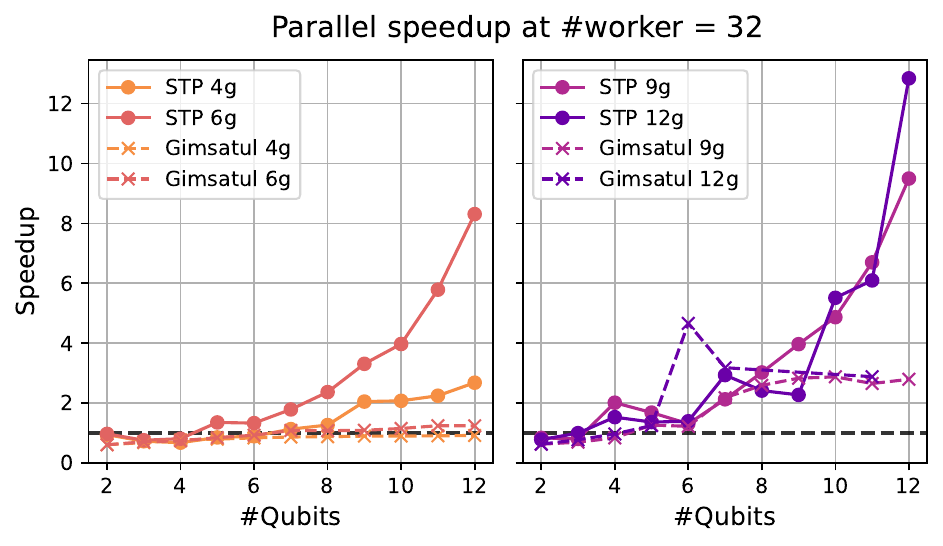}

    \caption{Parallel speedup of STP against Qiskit-SAT with the Gimsatul backend~\cite{fleury2022gimsatul} with 32 workers on random CNOT circuits. For each data point, 10 random circuits are generated and tested.}
    \label{fig:rand_circ_paral_speedup}
\end{figure}

\subsection{Experiment: Parallel Speedup}\label{sec:exp_paral_speedup}

%Modern processors increasingly rely on multicore architectures, making effective use of the available cores essential to translating hardware parallelism into wall-clock speedup. Parallel SAT solving has received substantial attention~\cite{martins2012overview}; nevertheless, obtaining robust and scalable speedups remains highly challenging~\cite{katsirelos2013resolution}. In contrast, our method exposes parallelizable structure at the topology level: the enumeration branches leading to distinct topologies, and the subsequent topology-solving tasks, can be assigned to workers independently. This structure is designed to yield more stable scaling as additional workers are introduced.

We next evaluate the parallel speedup of our method against the SAT-based baseline. For the SAT baseline, we use Qiskit-SAT with the multi-threaded Gimsatul backend, denoted as ``Gimsatul.'' Both methods are evaluated on the random CNOT-circuit instances described in Sec.~\ref{sec:exp_rand_circ}, using different numbers of workers. For an instance executed with $p$ workers, the parallel speedup is defined as $S_p=T_1/T_p$,
where $T_1$ and $T_p$ denote the wall-clock runtimes using one and $p$ workers, respectively. The reported speedups are averaged over the random instances in each $(n,g)$ configuration.

The results for 32 workers are shown in Fig.~\ref{fig:rand_circ_paral_speedup}; results for smaller worker counts exhibit similar trends and are provided in the supplementary material. A clear separation can be observed between the parallel scalability of STP and Gimsatul. For relatively small and easy instances (left figure), the speedup of Gimsatul generally remains close to $1$, indicating limited benefit from additional threads. Its average speedup improves on more difficult instances (right figure) but remains below $5$ across all tested configurations.

In contrast, the parallel speedup of STP increases consistently with problem difficulty. For deeper and more complicated instances, STP consistently achieves speedups above $5\times$, with a maximum observed speedup of $12.8\times$ on 32 workers at the most difficult configuration with $(n,g)=(12,12)$. For both methods, speedups below $1$ may occur on small instances because the parallelization overhead exceeds the computation time saved via parallelism.

This trend is consistent with the topology-level parallel structure of STP. For small instances, the workload is insufficient to amortize task-distribution and synchronization overhead. As the problem size increases, more independent topologies can be processed concurrently, allowing STP to utilize the available workers more effectively and thereby resulting in higher parallel speedup.

\subsection{Experiment: Circuit Peephole Rewriting Optimization on Real-World Circuits} \label{sec:cnot-resynthesis}

% Generated by exp/anal/build_phase_poly_table_v7.py.
\begin{table*}[tb]
\caption{Circuit peephole rewriting optimization on QASMBench circuits with $1000< \#\mathrm{gate}\le 10000$.}
\label{table:rawbig_qasmbench}
\begin{adjustbox}{max width=\textwidth, center}
\begin{tabular}{lccccc|ccccc}
\toprule
\multirow{2}{*}[-0.8ex]{Case} &
\multirow{2}{*}[-0.8ex]{\#qubits} &
\multirow{2}{*}[-0.8ex]{\#gates} &
\multirow{2}{*}[-0.8ex]{\makecell{\#CNOT\\shards}} &
\multirow{2}{*}[-0.8ex]{\makecell{\#Phase Poly.\\shards}} &
\multirow{2}{*}[-0.8ex]{\#CNOT} &
\multirow{2}{*}[-0.8ex]{\makecell{Qiskit-SAT~\cite{qiskit-sat}\\Runtime($s$)}} &
\multirow{2}{*}[-0.8ex]{\makecell{Gimsatul~\cite{fleury2022gimsatul}\\Runtime($s$) (=1)}} &
\multirow{2}{*}[-0.8ex]{\makecell{\textbf{STP (ours)}\\Runtime($s$)}} &
\multicolumn{2}{c}{\textbf{STP+Gimsatul Portfolio}} \\
\cmidrule(lr){10-11}
& & & & & & & & & Runtime($s$) & STP wins \\
\midrule
ising\_n98 & 98 & 1072 & 48 & 49 & 194 & 11.9 (0.4$\times$) & 5.3 & \textbf{1.5 (3.6$\times$)} & 1.7 (3.0$\times$) & 48/48 (100\%) \\
swap\_test\_n115 & 115 & 1085 & 57 & 57 & 456 & 32.0 (0.4$\times$) & 11.5 & \textbf{2.5 (4.7$\times$)} & 2.7 (4.2$\times$) & 57/57 (100\%) \\
knn\_129 & 129 & 1218 & 64 & 64 & 512 & 34.2 (0.4$\times$) & 13.0 & 3.6 (3.6$\times$) & \textbf{3.0 (4.3$\times$)} & 64/64 (100\%) \\
qugan\_n71 & 71 & 1415 & 70 & 103 & 552 & 21.9 (0.4$\times$) & 7.9 & \textbf{2.6 (3.1$\times$)} & 2.8 (2.8$\times$) & 35/35 (100\%) \\
adder\_n118 & 118 & 1834 & 111 & 153 & 845 & 1422 (0.2$\times$) & 245 & 56.1 (4.4$\times$) & \textbf{26.3 (9.3$\times$)} & 158/184 (86\%) \\
dnn\_n8 & 8 & 2032 & 0 & 60 & 192 & 0.61 (0.9$\times$) & 0.53 & 0.53 (1.0$\times$) & \textbf{0.53 (1.0$\times$)} & 0/0 (NA) \\
qft\_n29 & 29 & 2059 & 323 & 100 & 812 & 402 (0.2$\times$) & \textbf{83.8} & \textgreater 3600 & 87.7 (1.0$\times$) & 19/95 (20\%) \\
qugan\_n111 & 111 & 2235 & 110 & 163 & 872 & 28.6 (0.5$\times$) & 15.4 & 7.1 (2.2$\times$) & \textbf{7.0 (2.2$\times$)} & 55/55 (100\%) \\
wstate\_n380 & 380 & 2275 & 76 & 0 & 758 & 725 (0.1$\times$) & 41.1 & 3.1 (13.1$\times$) & \textbf{2.9 (14.1$\times$)} & 140/140 (100\%) \\
vqe\_uccsd\_n6 & 6 & 2282 & 274 & 150 & 1052 & 525 (0.1$\times$) & 58.7 & \textbf{12.0 (4.9$\times$)} & 13.0 (4.5$\times$) & 312/312 (100\%) \\
basis\_trotter\_n4 & 4 & 2490 & 56 & 150 & 582 & 27.2 (0.3$\times$) & 7.3 & 49.1 (0.1$\times$) & \textbf{4.8 (1.5$\times$)} & 57/58 (98\%) \\
gcm\_h6 & 13 & 3148 & 92 & 247 & 762 & 32.6 (0.6$\times$) & 21.0 & \textbf{13.8 (1.5$\times$)} & 14.2 (1.5$\times$) & 82/82 (100\%) \\
knn\_341 & 341 & 3232 & 170 & 170 & 1360 & 98.2 (0.4$\times$) & 40.9 & 16.8 (2.4$\times$) & \textbf{15.3 (2.7$\times$)} & 170/170 (100\%) \\
swap\_test\_n361 & 361 & 3422 & 180 & 180 & 1440 & 105 (0.4$\times$) & 44.3 & \textbf{18.6 (2.4$\times$)} & 18.6 (2.4$\times$) & 180/180 (100\%) \\
dnn\_n16 & 16 & 4064 & 0 & 124 & 384 & 9.4 (0.4$\times$) & 3.9 & 3.9 (1.0$\times$) & \textbf{3.8 (1.0$\times$)} & 0/0 (NA) \\
ising\_n420 & 420 & 4614 & 209 & 210 & 838 & 46.2 (0.7$\times$) & 31.5 & \textbf{15.8 (2.0$\times$)} & 17.1 (1.8$\times$) & 209/209 (100\%) \\
multiplier\_n45 & 45 & 5981 & 429 & 453 & 2574 & 332 (0.5$\times$) & 160 & 125 (1.3$\times$) & \textbf{103 (1.6$\times$)} & 600/605 (99\%) \\
adder\_n433 & 433 & 6769 & 408 & 573 & 3120 & 5125 (0.2$\times$) & 1144 & 340 (3.4$\times$) & \textbf{233 (4.9$\times$)} & 587/683 (86\%) \\
qugan\_n395 & 395 & 8057 & 394 & 589 & 3144 & 310 (0.8$\times$) & 243 & 217 (1.1$\times$) & \textbf{216 (1.1$\times$)} & 197/197 (100\%) \\
qft\_n63 & 63 & 9828 & 901 & 270 & 3906 & 2998 (0.1$\times$) & \textbf{317} & \textgreater 3600 & 335 (0.9$\times$) & 24/265 (9\%) \\
\midrule
\textbf{Median speedup} &  &  &  &  &  & 0.4$\times$ & 1.0$\times$ & 2.4$\times$ & 2.3$\times$ &  \\
\bottomrule
\end{tabular}
\end{adjustbox}
\end{table*}

\begin{figure*}[ht]
    \centering
    \includegraphics[width=0.9\textwidth]{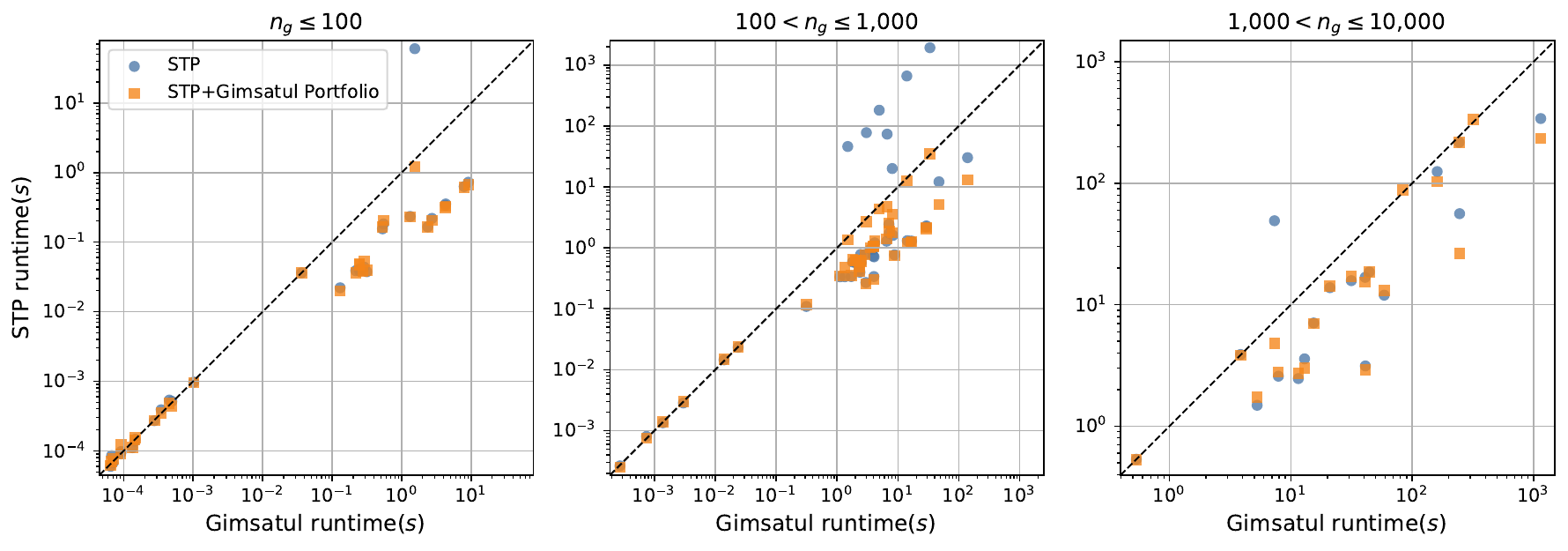}
    \caption{Optimization runtime of Gimsatul, STP and STP+Gimsatul portfolio on QASMBench~\cite{li2023qasmbench}. \textbf{Left}: Cases with gate count no more than 100; \textbf{Middle}: Cases with gate count greater than 100 but no more than 1000; \textbf{Right}: Cases with gate count greater than 1,000 but no more than 10,000.}
    \label{fig:runtime_qasmbench}
\end{figure*}

\begin{figure}[ht]
    \centering
    \includegraphics[width=0.45\textwidth]{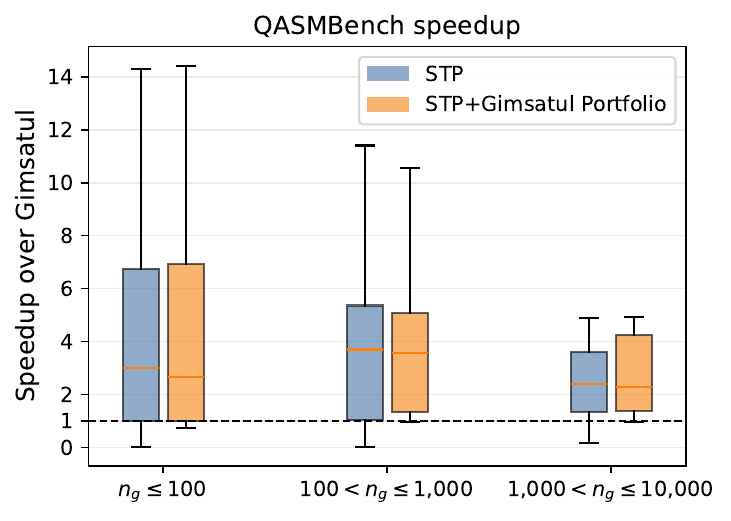}
    \caption{Speedup of STP-based kernels over Gimsatul on different categories of QASMBench}
    \label{fig:acc_qasmbench}
\end{figure}

  %<overall experiment setup>
  The random-circuit experiments isolate the performance of the exact-synthesis kernel. We next show how this advantage can be translated into a complete optimization workflow applicable to real-world quantum circuits. Following the use of exact synthesis as a local rewriting engine in classical logic synthesis~\cite{soeken2018Practical}, we integrate STP into a circuit re-synthesis pass for
  quantum circuits. Instead of attempting to find the optimal implementation of an entire CNOT circuit, which would be intractable for large real-world circuits, we repeatedly apply exact synthesis to appropriately sized local circuit regions, which is a form of peephole-optimization.
  %<while there may be another circuit optimization workflow that exhibits better overall gate complexity performance... this is just one possible way to show the advantage that a better exact synthesis kernel may bring us.>

\paragraph{Benchmark}
  We evaluate different algorithms on QASMBench~\cite{li2023qasmbench}, a
quantum circuit benchmark suite designed for evaluating NISQ systems and software.
It contains circuits from a broad range of application domains, including
state preparation, quantum chemistry, cryptography, quantum machine
 learning, optimization, and etc. QASMBench
has also been adopted by a number of recent studies on quantum circuits and architectures~\cite{wang2024atomique,tan2024compiling,tao2025quantum}.

We use QASMBench test cases that contain at most 10,000 gates. This gives 96 circuits,
which we divide by gate count $n_g$ into three disjoint categories:
32 circuits with $n_g\leq 100$, 44 with
$100<n_g\leq 1{,}000$, and 20 with
$1{,}000<n_g\leq 10{,}000$. The last group contains circuits with up to
433 qubits and 9,828 gates.

\paragraph{Optimization workflow and algorithm}
  For each circuit, every maximal sequence of consecutive CNOT+$R_z$ gates is extracted as a phase polynomial subcircuit, while all other gates remain unchanged.
  A subcircuit may still be too difficult to be optimized as a whole. We therefore further partition it from left to right into smaller circuit \emph{shards} before they are optimized by exact synthesis kernels. More specifically, the circuit shards are partitioned as follows. For each prefix of a subcircuit, we compute its parity
  matrix $A$ and a target-dependent complexity score $H_{\rm CNOT}(A)$, which is a heuristic that predicts the minimal number of CNOT gates required to implement $A$. Then, the longest prefix satisfying $H_{\rm CNOT}(A)\leq H_0$ is selected as the next shard.
  The threshold $H_0$ is set to 6, so it is beyond the typical ability of database-based methods~\cite{CNOT_opt_UCL}, and not too big so it becomes infeasible for exact synthesis kernels.

  The circuit shards are then sent to different exact synthesis kernels (algorithms) for optimization. The kernels we evaluate include:
  \begin{enumerate}
    \item Qiskit-SAT with the single-threaded Z3 backend;
    \item Gimsatul, implemented as Qiskit-SAT with a 32-thread Gimsatul backend;
    \item STP, our algorithm running with 32 workers; and
    \item STP+Gimsatul Portfolio, a portfolio that runs a 16-worker STP instance and a 16-thread Gimsatul instance concurrently and returns as soon as either solver completes.
  \end{enumerate}
  For the STP-based kernel, shards consisting of only CNOT gates are synthesized using its CNOT mode (Alg.~\ref{alg:exact_syn}), and shards consisting of both CNOT and $R_z$ gates are synthesized using its phase polynomial mode (Alg.~\ref{alg:exact_syn_phasepoly}). Since \qiskitsat{} does not natively support phase-polynomial optimization, to compare both algorithms on an equal footing, we enhance it with the encoding method of Li et al.~\cite{li2025HOPPS}, enabling it to serve as an exact synthesis kernel for phase polynomial shards as well. After optimization, the synthesized shards are reassembled with the unchanged gates to produce the final circuit.

  Optimizing each shard independently may miss optimization opportunities across shard boundaries. For example, two identical neighboring CNOT gates may remain unchanged if they are assigned to different shards, but they can actually be directly canceled out. So after the initial optimization pass that sweeps the entire circuit, we subsequently perform two rounds\footnote{We use two rounds of IBO because experiments show that most cross-boundary optimization chances are found in the first two rounds.} of additional random shard optimization, which picks circuit shards across previous boundaries randomly and optimizes them using the corresponding exact synthesis kernel. This strategy is also used in prior work~\cite{li2025HOPPS}, and is called \emph{iterative blockwise
  optimization} (IBO). We would follow their terminology. Each IBO task selects a window crossing the boundary between two previously optimized shards. %At least 30\% of the selected gates must come from either side of the boundary to ensure, and the resulting target must satisfy the same $H(A)\leq H_0$ constraint.

  \paragraph{Results} The experimental results are summarized in Fig.~\ref{fig:runtime_qasmbench}, with detailed results for the circuits containing more than $1{,}000$ gates reported in Table~\ref{table:rawbig_qasmbench}. Throughout this evaluation, we use the 32-thread Gimsatul configuration as the baseline, thereby comparing STP and the SAT-based method under the same overall parallelism budget.

  Fig.~\ref{fig:runtime_qasmbench} compares the end-to-end runtimes of STP-based methods and Gimsatul. In all three circuit-size categories, most points lie below the diagonal, indicating that STP is faster. Nevertheless, a small number of outliers show that pure STP can be slower than the SAT-based method on certain instances. To improve algorithm robustness, we propose an STP+Gimsatul portfolio kernel that runs a 16-worker STP instance and a 16-thread Gimsatul instance concurrently and returns as soon as either solver completes. This configuration is labeled ``STP+Gimsatul Portfolio'' in the figures. By exploiting the complementary power of the two approaches, the portfolio eliminates all the runtime outliers exhibited by pure STP.

  Across the full benchmark suite, pure STP achieves a median speedup of $3.41\times$ over the 94 instances that it completes within the 3,600s overall timeout, whereas the portfolio achieves a median speedup of $3.20\times$ and completes all 96 instances. Pure STP outperforms Gimsatul on 73 of its 94 completed instances, while the portfolio is faster on 81 of the 96 instances and incurs no timeout. In contrast, the single-threaded Qiskit-SAT configuration is slower than Gimsatul on almost all instances, with an overall median speedup of only $0.35\times$, highlighting the importance of effective parallelization.

The speedup distributions in Fig.~\ref{fig:acc_qasmbench} further show that the advantage of STP is not confined to certain circuit size categories. For $n_g\leq 100$, the median speedups of pure STP and the portfolio are $2.99\times$ and $2.67\times$, respectively; and the maximum speedup achieves $14.3\times$ and $14.4\times$. The median speedups increase to $3.70\times$ and $3.57\times$ for $100<n_g\leq 1{,}000$, and remain at $2.41\times$ and $2.29\times$ for $1{,}000<n_g\leq 10{,}000$. Though the maximum speedup drops for larger circuits, the median speedup remains stable. Thus, STP-based kernels provide consistent acceleration across different circuit-size regimes. Although pure STP is slower than Gimsatul on a small number of instances, the STP+Gimsatul portfolio substantially improves performance stability by retaining most of the STP speedup while providing a SAT-based fallback so as to avoid slowdowns.

Table~\ref{table:rawbig_qasmbench} reports all 20 circuits in the
largest category in more detail. As different exact synthesis kernels give the same optimized circuits, we mainly compare the runtimes of different algorithms. The STP-based configuration achieves the fastest runtime on 18 of the 20 cases: pure STP is fastest on 7 cases and the portfolio on 11. Although the portfolio configuration slightly sacrifices speed on small cases, it reaches speed comparable to pure STP on large cases by harnessing the complementary power between STP and SAT solvers. More specifically, the last column of Table~\ref{table:rawbig_qasmbench} reports the number of nontrivial shard optimization runs won by STP in the STP+Gimsatul portfolio, showing that STP achieves a win rate above 80\% on the vast majority of cases.
The portfolio configuration also avoids performance regressions: The pure STP configuration exceeds the one-hour limit on \texttt{qft\_n29} and \texttt{qft\_n63}; the portfolio nevertheless completes them in 88\,s and 335\,s, close to the corresponding Gimsatul times of 83.8\,s and 317\,s. These results show that the portfolio retains most of the performance advantage of STP while substantially improving robustness on occasional instances that are unfavorable to the STP approach.

\section{Related Works}

%Quantum circuit synthesis has been studied from many angles, including exact and asymptotically optimal decompositions for general unitaries~\cite{csd,Gray_code_for_U_syn,qsd}, state preparation~\cite{iten2016quantum,plesch2011quantum}, and topology-aware numerical search~\cite{qsearch}. These methods target broader gate sets and problem classes, while this paper focuses on exact CNOT and parity-network kernels.

\paragraph{Classical exact synthesis and STP-based exact synthesis}
Exact synthesis has been extensively studied in classical logic synthesis, and readers can refer to~\cite{soeken2018Practical} for a comprehensive review. A typical workflow of SAT-based classical exact synthesis involves enumerating different topology families, and instantiating each partial gate using SAT solvers~\cite{haaswijk2020SATBased}. More recently, Pan and Chu~\cite{pan2023Exact,pan2023Semitensor} explored replacing SAT solving with STP-based circuit solving, using DAG circuit topologies together with STP factorization for exact synthesis and logic rewriting. Our work is inspired by this line of research, but we have redesigned the STP circuit solver to adapt to CNOT circuits and further introduced topology-pruning and factorization optimization techniques that exploit the specific algebraic structure of parity Boolean functions.

\paragraph{Optimization of CNOT and phase polynomial circuits} The synthesis of linear reversible circuits, or CNOT circuits, was first studied by Patel, Markov, and Hayes, who gave an asymptotically optimal synthesis algorithm for CNOT circuits~\cite{pmh_optimal_linear_synthesis}. More recent work explores further optimization of CNOT circuits using greedy algorithm~\cite{debrugiere2021Gaussian, CNOT_opt_UCL}, A* search~\cite{CNOT_opt_UCL} and database lookup~\cite{CNOT_opt_UCL,CNOT_opt_group_theoretic}. There has also been extensive study on optimization of phase polynomial circuits, including GraySynth~\cite{amy2018CNOTcomplexity} and PhasePoly~\cite{chen2025phasepoly} that uses A* search. Compared with these heuristic approaches, our method provides optimal synthesis and adopts a different technical route: topology enumeration combined with a domain-specific STP circuit solver.

\paragraph{Exact quantum circuit synthesis}
%Exact synthesis of general quantum circuits is $\mathsf{QMA}$-hard~\cite{janzing2005Nonidentitycheck};
Amy, Azimzadeh and Mosca showed that the exact synthesis of CNOT+$R_z$ is $\mathsf{NP}$-hard under certain conditions in 2018~\cite{amy2018CNOTcomplexity}, which inspired later works to use SAT solvers to tackle the problem. Shaik and Pol~\cite{shaik2024Optimal} encoded the problem into SAT, classical planning and quantified Boolean formula instances and used to existing solvers to solve it, and found that
the SAT-based method is the most effective one among all three approaches. Ivrii and Treinish~\cite{qiskit-sat} developed a Qiskit plugin based on the Z3 solver~\cite{Z3} for the exact synthesis of Clifford and CNOT gates, which serves as the off-the-shelf method for quantum exact synthesis, and is the primary baseline used in our experiments. HOPPS~\cite{li2025HOPPS} encodes the exact synthesis of phase polynomial circuits into SAT, and has already been used in our experiment. There are also attempts that use precomputed tables to find the optimal implementation of CNOT circuits~\cite{CNOT_opt_UCL}.

\section{Conclusion}

We presented an STP-based exact synthesis framework for CNOT and phase polynomial circuits. The method combines topology enumeration and pruning with an STP-based circuit solver that instantiates partial gates through matrix factorization. Compared with SAT-based and database-based approaches, the framework introduces lower encoding overhead, requires no precomputed circuit database, and can naturally solve different topologies in parallel.

Experiments on random synthesis targets and QASMBench circuits show that STP is particularly effective for small and moderately difficult exact-synthesis tasks and can accelerate practical peephole rewriting workflows. We also showed that an STP+SAT portfolio can improve robustness by exploiting the complementary behavior of the two methods. Future work includes stronger pruning for phase polynomial synthesis and further improvements to the scalability of topology enumeration and solving.

\section*{Acknowledgment}

The authors have used ChatGPT to assist with manuscript writing. The work is partially supported by the Young Scientists Fund of the Natural Science Foundation of Jiangsu Province (No. BK20240536), and the
National Natural Science Foundation of China (No. 62502203).

% The authors would like to thank the project collaborators for discussions on exact synthesis, parity networks, and STP-based circuit solving.

\bibliographystyle{IEEEtran}
\bibliography{libpaper}

\vspace{-4em}
\begin{IEEEbiography}[{\includegraphics[width=1in,height=1.25in,clip,keepaspectratio]{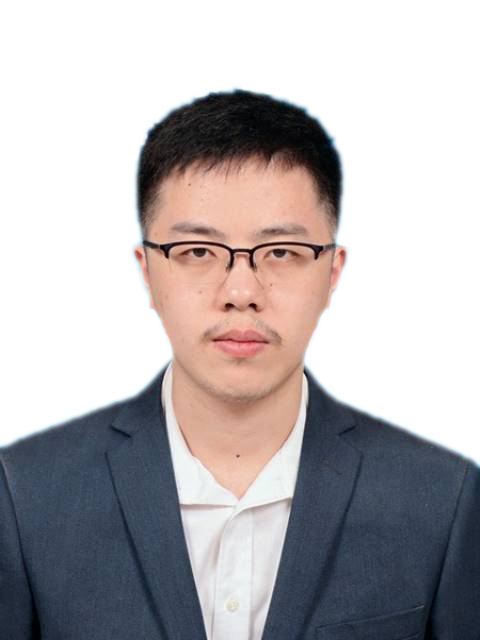}}]{Chenjian Li}
 received his B.Sc. degree in physics and computer science from Peking University in 2022. He is currently a Ph.D. student at the Institute of Software, Chinese Academy of Sciences. His research interests include the synthesis and optimization of quantum circuits and quantum differential privacy.
\end{IEEEbiography}
\vspace{-4em}

\begin{IEEEbiography}[{\includegraphics[width=1in,clip,keepaspectratio]{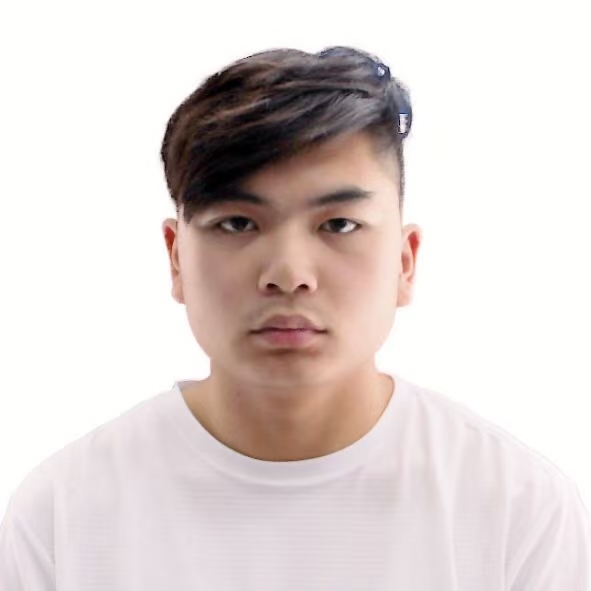}}]{Dingchao Gao} received his B.Sc. from Xidian University in 2021 and his M.S. from the Institute of Software, Chinese Academy of Sciences, in 2024. He is currently working as a software engineer at the Institute of Software, Chinese Academy of Sciences.

\end{IEEEbiography}
\vspace{-4em}
\begin{IEEEbiography}[{\includegraphics[width=1in,height=1.25in,clip,keepaspectratio]{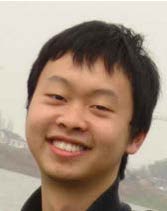}}]
{Xiangzhen Zhou} received the B.E. degree in engineering from Nanjing Normal University, Nanjing, China, in 2010, and the Ph.D. degree in engineering from Southeast University, Nanjing. From 2018 to 2020, he was a Visiting Student with the Centre for Quantum Software and Information, University of Technology Sydney, Ultimo, NSW, Australia. He is currently an Associate Professor with Nanjing Tech University, Nanjing. His research interests include quantum computing and quantum circuit optimization.
\end{IEEEbiography}
\vspace{-4em}

\begin{IEEEbiography}[{\includegraphics[width=1in,height=1.25in,clip,keepaspectratio]{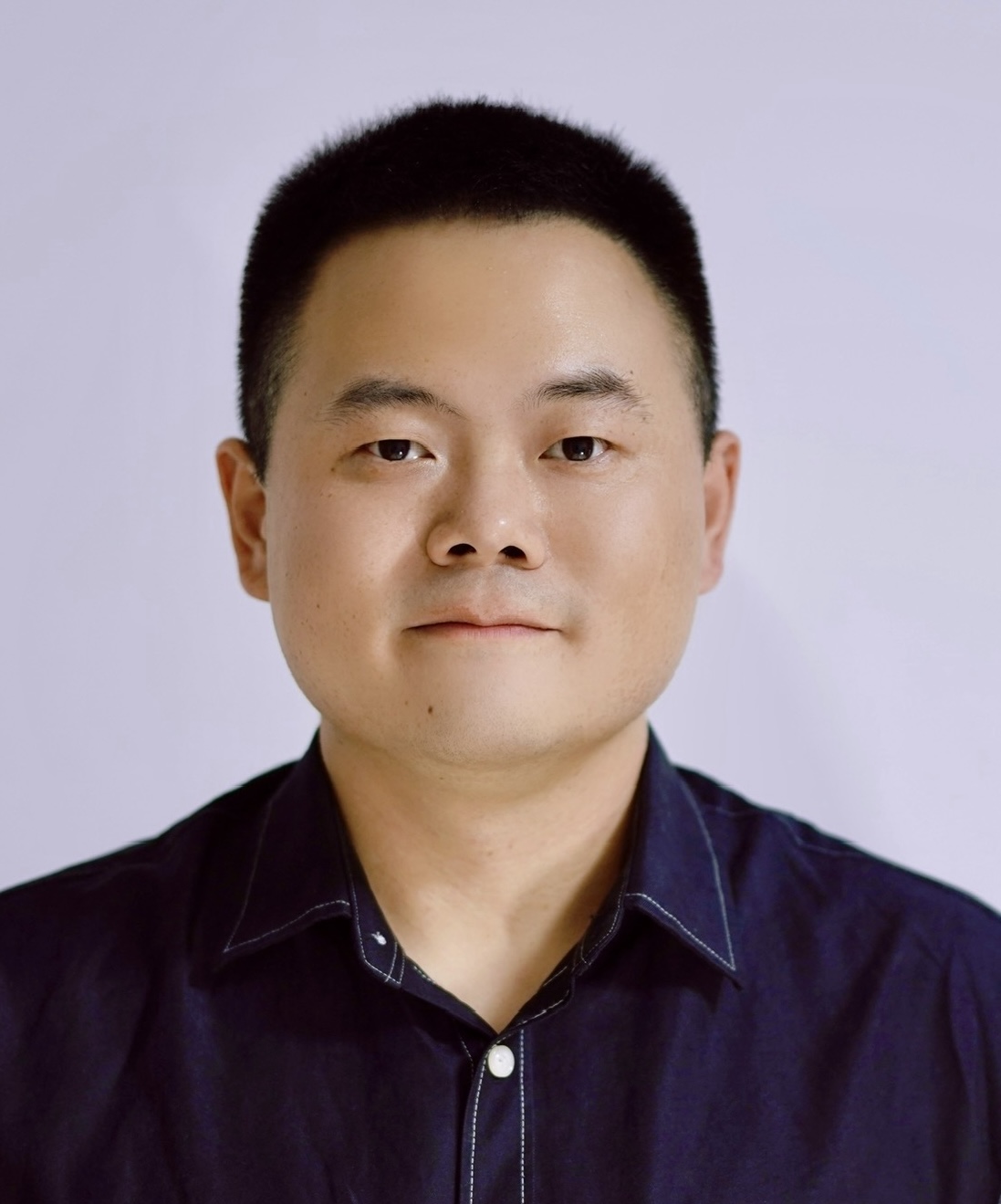}}]
{Ji Guan} received the B.Sc. degree in mathematics from Sichuan University, China, in 2014, and the Ph.D. degree in computer science from the University of Technology Sydney, Australia, in 2018. He is currently a Associate Research Professor at the Institute of Software, Chinese Academy of Sciences. His primary research interests are trustworthy quantum machine learning and model checking quantum systems.
\end{IEEEbiography}
\vspace{-4em}

\begin{IEEEbiography}[{\includegraphics[width=1in,height=1.25in,clip,keepaspectratio]{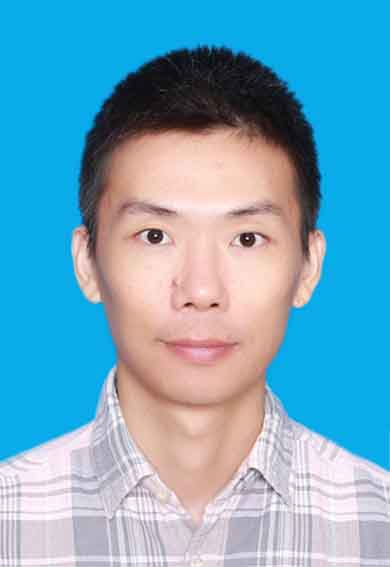}}]
{Pengcheng Zhu} received the B.E. and M.E. degrees in Computer Science and Technology from Liaoning Technical University, Fuxin, China, in 2004 and 2007, respectively, and the Ph.D. degree in Information and Communication Engineering from Nantong University, Nantong, China, in 2022.
He is currently an Associate Professor with the College of Artificial Intelligence, Taizhou University, China. His current research interests include reversible logic synthesis, quantum circuit synthesis, and quantum computing intelligence.
\end{IEEEbiography}
\vspace{-4em}

\begin{IEEEbiography}[{\includegraphics[width=1in,height=1.25in,clip,keepaspectratio]{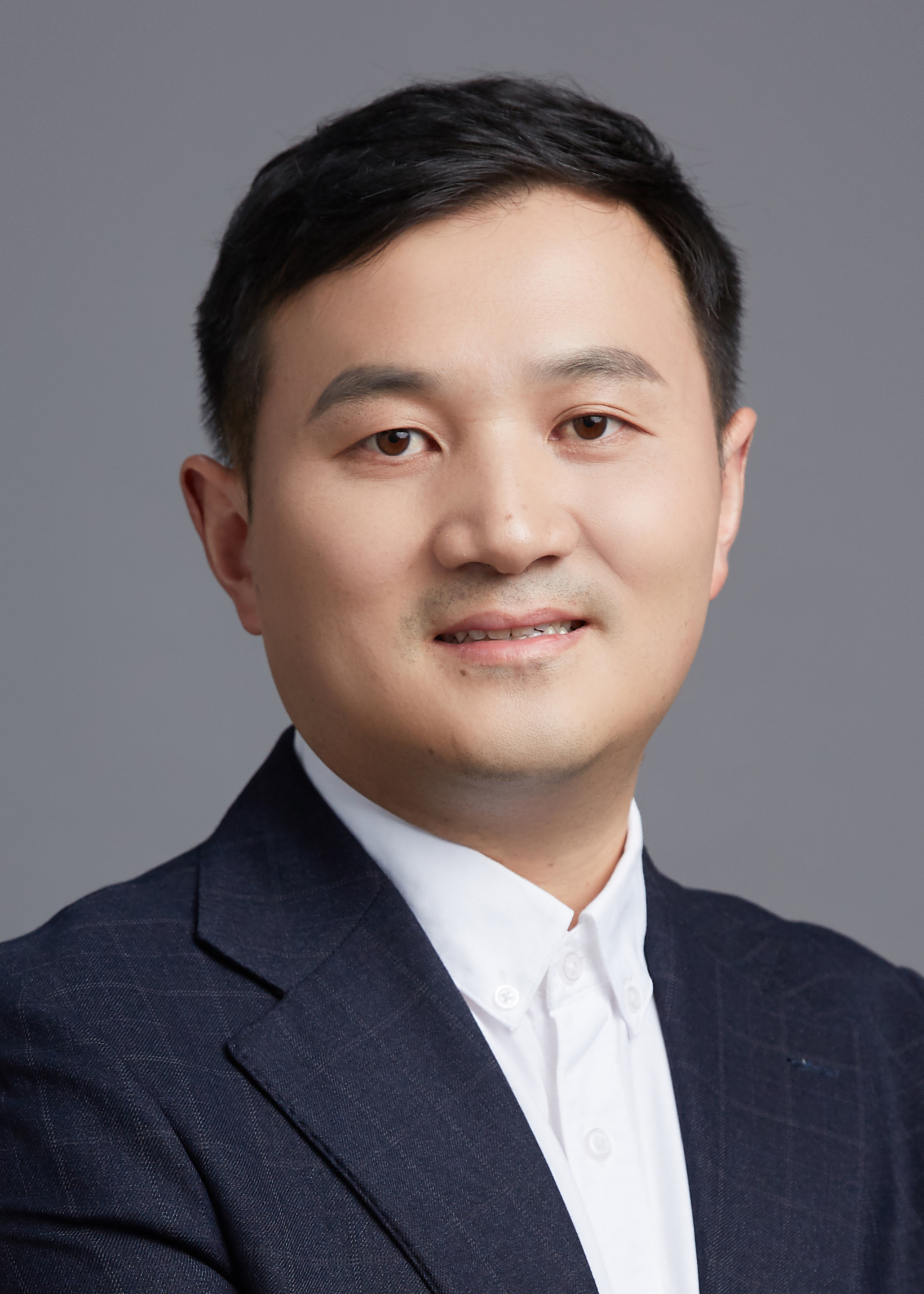}}]
{Zhufei Chu} (Member, IEEE)
received the B.S. degree in electronic engineering from Shandong University, Weihai, China, in 2008 and the M.S. and Ph.D. degrees in communication and information systems from the School of Integrated Circuits, Ningbo University, Ningbo, China, in 2011 and 2014, respectively. He was a Postdoctoral Fellow at the \'Ecole Polytechnique F\'ed\'erale de Lausanne (EPFL) from 2016 to 2017. He is currently a Professor with the School of Integrated Circuits, Ningbo University, Ningbo, China.

He serves as the General Chair (2026), Proceedings Chair (2019--2021), and Finance Chair (2022--2023) of the International Workshop on Logic and Synthesis (IWLS); Track Chair (2023--2025) of the IEEE International Symposium of EDA (ISEDA); and a Technical Program Committee Member for the Design Automation Conference (DAC), Design, Automation and Test in Europe Conference (DATE), Asia and South Pacific Design Automation Conference (ASP-DAC), and IWLS.

He actively maintains the logic synthesis framework ALSO (\url{https://github.com/nbulsi/also}). His current research interests include various aspects of logic synthesis and its applications.
\end{IEEEbiography}
\vspace{-4em}

%\appendices

% =============================================================================
% Supplementary material (generated from supp_mat.tex)
% =============================================================================
\clearpage
\onecolumn
\begingroup

% Notation used only by the supplementary material.
\newcommand\p{{p_{\v a}}}
\renewcommand\L{\mathsf{L}}
\newcommand\R{\mathsf{R}}
\newcommand\D{\mathsf{D}}
\newcommand\M{\mathcal{M}}

% Reproduce IEEEtran journal-title typography without a second \maketitle.
\begin{center}
  {\normalfont\Huge Supplementary Material for\\
``Parallelizable Exact Synthesis of Quantum Circuits via Semi-Tensor Product''\par}
  \vspace{1.5em}
  {\normalfont\sublargesize Chenjian Li, Dingchao Gao, Xiangzhen Zhou, Ji Guan,
Pengcheng Zhu, and Zhufei Chu\par}
\end{center}
\vspace{1.5em}

\appendices

% Supplement-only numbering: (A.1), Fig. A.1, Table A.1, and so on.
\numberwithin{equation}{section}
\makeatletter
\@addtoreset{figure}{section}
\@addtoreset{table}{section}
\makeatother
\renewcommand{\thefigure}{\thesection.\arabic{figure}}
\renewcommand{\thetable}{\thesection.\arabic{table}}

% -----------------------------------------------------------------------------
% Appendix 1
% -----------------------------------------------------------------------------
\section{Proof of Theorem 2}
\label{sec:supp_factorization_proof}

We first show that the structural matrices of parity Boolean functions exhibit a \emph{recursive symmetry}, then use the property to prove the theorem.

For any given Boolean function $\Phi$, we denote the left and right half of its structural matrix as
\begin{equation}
    \L[M_\Phi] = M_{\Phi|_{x_1=0}},\quad
    \R[M_\Phi] = M_{\Phi|_{x_1=1}}.
\end{equation}
For recursive subdivision, we would abbreviate as $\L\L[M_\Phi]:=\L[\L[M_\Phi]]$.

\begin{supplemma}
    The structural matrices of parity Boolean function is \emph{symmetric}. More specifically, for a parity Boolean function $\p(\v x)=\bigoplus_{i=1}^n a_ix_i$, either of the following condition holds:
    \begin{enumerate}
        \item if $a_1=0$, $\L[M_{\p}]=\R[M_{\p}]$,
        \item if $a_1=1$, $\L[M_{\p}]=\overline{\R[M_{\p}]}$,
    \end{enumerate}
    where $\overline{M}$ dentotes the element-wise negation of binary matrices.
\end{supplemma}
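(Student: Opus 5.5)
The plan is to argue directly from how the structural matrix $M_\Phi=M_\Phi\stp x_1\stp\cdots\stp x_n$ is laid out. The column of $M_\Phi$ indexed by the input $(x_1,\dots,x_n)$ sits at position $1+\sum_i x_i2^{n-i}$. In particular, $x_1$ is the most significant bit of the column index.

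First, I will check that $\L[M_\Phi]$ and $\R[M_\Phi]$ are the structural matrices of the restrictions. Since $\Phi=M_\Phi\stp x_1\stp\v y$ with $\v y=x_2\stp\cdots\stp x_n$, and $x_1=|0\>$ or $|1\>$, we get $M_\Phi\stp(x_1\otimes I_{2^{n-1}})$. For $x_1=|0\>$ this selects the first $2^{n-1}$ columns, and for $x_1=|1\>$ it selects the last $2^{n-1}$ columns. Hence $\L[M_\Phi]=M_{\Phi|_{x_1=0}}$ and $\R[M_\Phi]=M_{\Phi|_{x_1=1}}$, which is consistent with the notation just introduced.

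Next, I restrict the parity function. We have $p_{\v a}|_{x_1=0}=\bigoplus_{i\ge2}a_ix_i=:q(\v y)$ and $p_{\v a}|_{x_1=1}=a_1\oplus q(\v y)$.
\begin{itemize}
\item If $a_1=0$, both restrictions equal $q$. Their structural matrices coincide, so $\L[M_{p_{\v a}}]=\R[M_{p_{\v a}}]$.
\item If $a_1=1$, the right restriction is $\lnot q$. Its structural matrix is $M_\lnot\stp M_q$ with $M_\lnot=\left[\begin{smallmatrix}0&1\\1&0\end{smallmatrix}\right]$. This swaps the two rows of $M_q$.
\end{itemize}
Each column of a Boolean structural matrix is one of $|0\>,|1\>$, so swapping the rows is exactly entrywise negation over $\{0,1\}$. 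Therefore $\R[M_{p_{\v a}}]=\overline{\L[M_{p_{\v a}}]}$, and equivalently $\L=\overline{\R}$ because negation is an involution.

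The only point needing care is the column-ordering convention: one must confirm that the leftmost STP variable $x_1$ corresponds to the most significant index bit, so that left and right halves really mean $x_1=0$ and $x_1=1$. This follows from $x_1\stp\v y=x_1\otimes\v y$ for column vectors. I expect this to be routine rather than a genuine obstacle.

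Finally, the lemma applies recursively to $q$ and to any prefix restriction. This yields the "recursive symmetry" used afterwards: on every $\L\cdots\L$ and $\R\cdots$ block, the halves are equal or complementary according to the next coefficient $a_i$.
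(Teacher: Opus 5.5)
Your proof is correct and follows the same route as the paper: restrict $p_{\bm{a}}$ at $x_1=0$ and $x_1=1$, note that the two restrictions coincide when $a_1=0$ and are complementary when $a_1=1$, and read this off as a relation between the left and right halves of the structural matrix. Your extra checks are sound, and the paper leaves them implicit (the first is built into its definition of $\mathsf{L},\mathsf{R}$): that the halves really are the structural matrices of the restrictions, and that negating the function amounts to entrywise negation of the one-hot columns.
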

\begin{proof}
    If $a_1=0$, then $\p$ is independent of the value $x_1$. Then we have:
    \begin{equation}
        \L[M_{\p}]=M_{\p|_{x_1=0}}=M_{\p|_{x_1=1}}=\R[M_{\p}].
    \end{equation}
    If $a_1=1$, then
    \begin{align}
\p|_{x_1=1}
&=1\oplus \left(\bigoplus_{i=2}^n a_ix_i\right)\\
&=\lnot\left(\bigoplus_{i=2}^n a_ix_i\right)\\
&=\overline{\p|_{x_1=0}},
\end{align}
which leads to the relation on structural matrices:
\begin{equation}
\L[M_\p]=M_{\p|_{x_1=0}}=M_{\lnot (\p|_{x_1=1})}=\overline{\R[M_\p]}.
\end{equation}
Thus we have proved the symmetry of the structural matrices of parity Boolean functions.
\end{proof}

Then by applying the technique used in the proof repeatedly, we can further prove the \emph{recursive symmetry} of structural matrices of parity Boolean matrices, which, states that all $1/2^t$ portions of the structural matrix for a given parity Boolean functions, are somewhat similar and resemble each other.
\begin{suppdefinition}[Subdivided submatrices of structural matrix]
    Suppose $M_\Phi$ is a structural matrix for some Boolean function $\Phi$. Denote all 1-subdivded submatrix of $M_\Phi$ as $\mathcal M_{(1)}:=\{\L[M_{\Phi}],\R[M_\Phi]\}$. Then we can define the set of all $k$-subdivided submatrix of $M_\Phi$ as:
    \begin{align}
        \mathcal M_{(k)}&=\{\L[M],\R[M]: M\in\mathcal M_{(k-1)}\} \quad k\ge 2\\
        &=\{\mathsf{D_k}\mathsf{D_{k-1}}\cdots\mathsf{D_1}[M_f]:\mathsf{D}_i=\L\text{ or }\R\}
    \end{align}.
\end{suppdefinition}
\begin{supplemma}[Recursive symmetry]\label{lemma:recursive_sym}
    The structural matrix of a parities Boolean function is \emph{recursive symmetric}. More specifically, for $n$-variable parity Boolean function $\p$ and its associated structural matrix $M_\p\in \Ftwo^{2\times 2^n}$, there exists matrix $m\in \Ftwo^{2\times 2^{n-k}}$ for $k\le n$ such that all $k$-subdivided submatrix of $M_\p$ satisfy
    \begin{equation}
        M=m\text{  or  } M=\overline m,\quad \forall\,M\in \mathcal M_{(k)}.
    \end{equation}
\end{supplemma}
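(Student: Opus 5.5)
We argue by induction on $k$, using the previous lemma (symmetry of parity structural matrices) at each step. The key observation is that every $k$-subdivided submatrix is itself the structural matrix of a restriction of $\p$. Concretely, $\mathsf{D}_k\cdots\mathsf{D}_1[M_\p]=M_{\p|_{x_1=c_1,\dots,x_k=c_k}}$, where $c_i=0$ if $\mathsf{D}_i=\L$ and $c_i=1$ if $\mathsf{D}_i=\R$. This holds because $\L$ and $\R$ pick out the restriction $x_1=0$ or $x_1=1$ of the leading variable. Once $x_1$ is fixed, the remaining function has leading variable $x_2$, so the next operator acts on $x_2$, and so on.

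Next we compute these restrictions explicitly. For a parity function,
\begin{equation}
\p|_{x_1=c_1,\dots,x_k=c_k}(x_{k+1},\dots,x_n)=\Big(\bigoplus_{i=1}^k a_ic_i\Big)\oplus\bigoplus_{i=k+1}^n a_ix_i .
\end{equation}
Let $m:=M_{q}\in\Ftwo^{2\times 2^{n-k}}$ be the structural matrix of the residual parity $q=\bigoplus_{i>k}a_ix_i$. Each restriction is then either $q$ or $\lnot q$, according to the constant bit $\bigoplus_{i\le k}a_ic_i$.

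It remains to note that negating a Boolean function corresponds to element-wise negation of its structural matrix. Each column of a structural matrix is $|0\>$ or $|1\>$, and negation swaps the two, which flips both entries of the column. Hence $M_{\lnot q}=\overline{m}$, and every $M\in\M_{(k)}$ equals $m$ or $\overline m$, as claimed. The case $k=n$ is degenerate: $m$ is $2\times1$, namely $|0\>$, and the claim still holds.

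To stay closer to the lemma's own recursive framing, one can instead induct on $k$ directly. Suppose every element of $\M_{(k-1)}$ is $m'$ or $\overline{m'}$, where $m'$ is the structural matrix of the parity in $x_k,\dots,x_n$. Apply the previous lemma to $m'$: the halves $\L[m']$ and $\R[m']$ are equal if $a_k=0$, and complementary if $a_k=1$. Also $\L[\overline{m'}]=\overline{\L[m']}$, and likewise for $\R$. Setting $m=\L[m']$ then closes the induction.

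The main obstacle is the indexing step: checking that $\L$ and $\R$ really correspond to fixing the leading variable in the paper's STP ordering. This amounts to confirming that in $M_\Phi\stp x_1\stp\cdots\stp x_n$ the column index is the binary number $x_1x_2\cdots x_n$ with $x_1$ most significant. It follows from $x_1\stp(\cdot)$ selecting the left or right half of the columns. Everything after that is bookkeeping.
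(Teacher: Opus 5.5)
Your proposal is correct and is essentially the paper's own argument: the paper only remarks that the lemma follows by applying the technique of the previous lemma repeatedly (restrict the leading variable, and note that a parity restricts to the same function or its negation), and both your direct $k$-fold restriction argument and your induction on $k$ carry this out faithfully. You also make explicit two points the paper leaves implicit: that $M_{\lnot q}=\overline{M_q}$, and that the induction hypothesis must record that $m'$ is itself the structural matrix of the residual parity $\bigoplus_{i\ge k}a_ix_i$, so that the previous lemma can be reapplied at each step.
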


Finally we can prove the following theorem, by simplifying the general STP-factorization criterion (Theorem 1) with recursive symmetry of the structural matrices of parity Boolean functions.
\begin{supptheorem}
    Suppose $\Phi(\v x)=p_{\v a}(\v x)=\bigoplus_{i=1}^n a_ix_i$ is a parity Boolean function and $M_{p_{\v a}}=\big[M_1\,M_2\,\cdots\,M_{4\cdot 2^t}\big]$ is the structural matrix corresponding to $p_{\v a}$. Then factorization $M_{p_{\v a}}=M'\stp (I_{2^t}\otimes M_{\oplus})$ is feasible for $t\le n-2$ if and only if $M_1=M_4$ and $M_2=M_3$.
\end{supptheorem}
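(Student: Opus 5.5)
The plan is to reduce the statement to the general criterion (Theorem~\ref{thm:factorization}) and then use the recursive symmetry lemma (Lemma~\ref{lemma:recursive_sym}) to collapse the resulting family of block conditions to the first group of four blocks.

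\emph{Step 1: read off the general criterion.} Set $w:=2^{n-t-2}$, so each block $M_k$ is $2\times w$ and there are $4\cdot 2^t$ of them. In $M'\stp(I_{2^t}\otimes M_\oplus)$ the factor $I_{2^t}\otimes M_\oplus$ has $4\cdot 2^t$ columns while $M_{p_{\v a}}$ has $2^n$. The STP therefore inflates the factor by $I_w$, which is exactly the block form used in Theorem~\ref{thm:factorization}. The columns of $I_{2^t}\otimes M_\oplus$ are standard basis vectors. For $j=0,\dots,2^t-1$, columns $4j+1$ and $4j+4$ both equal $\v e_{2j+1}$, columns $4j+2$ and $4j+3$ both equal $\v e_{2j+2}$, and distinct $j$ give disjoint indices. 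Theorem~\ref{thm:factorization} then says the factorization is feasible iff
\begin{equation}
M_{4j+1}=M_{4j+4},\quad M_{4j+2}=M_{4j+3},\qquad \forall\, j\in\{0,\dots,2^t-1\}.
\end{equation}
The ``only if'' direction of the statement is immediate, since it is just the case $j=0$.

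\emph{Step 2: recursive symmetry.} Next I would prove Lemma~\ref{lemma:recursive_sym} by induction on $k$. The key point is that $\L$ and $\R$ of a parity structural matrix are structural matrices of the restricted functions $p_{\v a}|_{x_1=0}$ and $p_{\v a}|_{x_1=1}$. Each of these is again a parity (or the negation of one) in the remaining variables. Negation commutes with $\L$ and $\R$. So applying the previous lemma at each level, every $\mathsf D_k\cdots\mathsf D_1[M_{p_{\v a}}]$ equals either $m:=\L^k[M_{p_{\v a}}]$ or $\overline m$.

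\emph{Step 3: the ``if'' direction.} With $x_1$ as the most significant index, the $2^t$ consecutive groups $[M_{4j+1}\,M_{4j+2}\,M_{4j+3}\,M_{4j+4}]$ are precisely the $t$-subdivided submatrices in $\mathcal M_{(t)}$. Since $t\le n-2$, each group still contains the four blocks. By Step 2 each group equals either the $j=0$ group $[M_1\,M_2\,M_3\,M_4]$ or its entrywise negation. Entrywise negation preserves equalities between blocks. Hence $M_1=M_4$ and $M_2=M_3$ propagate to every $j$, and Step 1 gives feasibility.

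\emph{Main obstacle.} The mathematics is light; the step needing most care is the index bookkeeping. One must confirm that the block partition induced by $I_{2^t}\otimes M_\oplus$ under the STP padding aligns exactly with the $t$-fold $\L/\R$ subdivision, i.e.\ that variable $x_1$ indexes the most significant bit of the column index. One must also confirm that ``the first $1/2^t$ portion'' means the $j=0$ group, and that the hypothesis $t\le n-2$ is exactly what ensures $w\ge 1$.
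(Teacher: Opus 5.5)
Your proposal is correct and follows essentially the same route as the paper's proof. You use Theorem~\ref{thm:factorization} to reduce feasibility to $M_{4j+1}=M_{4j+4}$, $M_{4j+2}=M_{4j+3}$ for all $j$, and then use recursive symmetry (each $t$-subdivided group equals $m$ or $\overline{m}$) to propagate the $j=0$ conditions to every group; your induction for that lemma and the index bookkeeping ($w=2^{n-t-2}\geq 1$ iff $t\le n-2$) fill in details the paper leaves implicit.
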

\begin{proof}
    We prove that it suffices to check $M_1\sim M_4$ to validating the STP factorization (``$\Leftarrow$''). The other direction (``$\Rightarrow$'') can be directly validated throught straightforward calculation and is trivial.

    Denote the first $t$-subdivied submatrix of $M_\p$ as $M_0^{(t)}:=\L^t[M_\p]=\big[M_1\, M_2\, M_3\, M_4]$. Then we see $(M_1=M_4\big)\land\big(M_2=M_3\big)$ can be abstracted as a condition on the first $t$-subdivided submatrix block:
    \begin{equation}
        \big(\L\L[M_0^{(t)}]=\R\R[M_0^{(t)}]\big)\land \big(\L\R[M_0^{(t)}]=\R\L[M_0^{(t)}]\big)
    \end{equation}
    By the recursive symmetry of $M_\p$, we see that the local equivalence condition propagates to all $t$-subdivded submatrices:
    \begin{equation}
        \big(\L\L[M]=\R\R[M]\big)\land \big(\L\R[M]=\R\L[M]\big), \quad M\in\mathcal M_{(t)}.
    \end{equation}
    Translating the relation back to relations of submatrices $M_i$s, we get
    \begin{equation}
        M_{4j+1}=M_{4j+4},\,
        M_{4j+2}=M_{4j+3},\quad \forall\, j\in[2^t].
    \end{equation}
    Then by the general STP-factorization criterion (Theorem 1), we see the factorization $M_{p_{\v a}}=M'\stp (I_{2^t}\otimes M_{\oplus})$ is feasible.
\end{proof}

\section{Heuristic used in factorizer}

In the factorization module, the factorizer estimates the factorization cost required for each equation to reach the next opportunity for reducing a gate decision, and prioritizes the equation with the lowest estimated cost.

For an equation that takes the form $\Phi=[\text{other factors}]\stp M_{g_i}\stp [M_w\text{ and } M_r\text{ factors}]=M$, the factorizer will not be able to eliminate gate decisions until the gate-related structural matrix $M_{g_i}$ is reached. Furthermore, we note that the cost of factorization steps related to $M_r$ and $M_w$ are much smaller compared to that of structural matrices related to gates. Denote $\Phi$ as $\Phi=M_L\stp M_{g_i}\stp M_R$ and $M=M'\stp M_R$, a rudimentary heuristic is then to estimate the cost of the closest gate-related factorization step:
\begin{equation}
    h_1(\Phi,M)=[\text{matrix width of }M']
\end{equation}

However, we note that factorization of $M_r$ factors would introduce extra indefinite variables, which would greatly complicate the following factorizing and inferring process. To account for the effect, we finally use the following recursively defined heuristic:
\begin{align}
    H_{\rm{fac}}(I,I)&:=\infty\\
    H_{\rm{fac}}(M_L\stp M_w, M)&:=H_{\rm{fac}}(M_L,M') \\
    H_{\rm{fac}}(M_L\stp M_r,M)&:=H_{\rm{fac}}(M_L,M')+2\cdot\#[\text{introduced var.}] \\
    &=H_{\rm{fac}}(M_L,M')+4\cdot [\text{matrix width of }M] \\
    H_{\rm{fac}}(M_L\stp M_{g_i},M)&:=[\text{matrix width of }M']
\end{align}
where $M'$ denotes the residual structural matrix of $M$ after the factorization step, and $\#[\text{introduced var.}]$ denotes the number of indefinite variables introduced in the factorization step. The first line corresponds to an empty equation $I=I$, which does not have actual contraining power. Consequently, its heuristic is set to infinity so its factorization priority is the lowest. Note that the concrete value of the heuristic only involves sum of matrix widths, so $M'$ does not have to be computed in advance to derive the heuristic.

\section{CNOT Gate Number Heuristic $H_{\rm CNOT}$}
In the peephole rewriting optimization experiment, the heuristic $H_{\rm CNOT}$ used to predict the number of CNOT gates required to implement a linear parity function $A$ is defined as follows:
\begin{align}
    f(M)&=\sum_{j}\log\left(1+\sum_i M_{ij}\right) \\
    h_{\rm CNOT}(A)&=f(A\oplus I_n)\\
    H_{\rm CNOT}(A)&=\frac{1}{4}\big[h_{\rm CNOT}(A)+h_{\rm CNOT}(A^\top)+h_{\rm CNOT}(A^{-1})+h_{\rm CNOT}(A^{-\top})\big]
\end{align}
where $H_{\rm CNOT}(A)$ is the final heuristic function. The intuition behind this heuristic is to estimate how ``far'' a linear function is from the identity matrix. The basic $f(\cdot)$ function prioritizes the elimination of columns that are ``almost done". Averaging over four equivalent variants of $A$ further refines the estimate, since these variants require exactly the same number of CNOT gates to implement as $A$ but provide more diversity. The heuristic function is adapted from the one proposed in \cite{debrugiere2021Gaussian}.

% -----------------------------------------------------------------------------
% Appendix 2
% -----------------------------------------------------------------------------
\newpage
\section{Parallel speedup of with different workers numbers on random CNOT circuits}
\label{sec:supp_paral_speedup}
\begin{figure}[ht]
    \centering
    \includegraphics[width=0.49\textwidth]{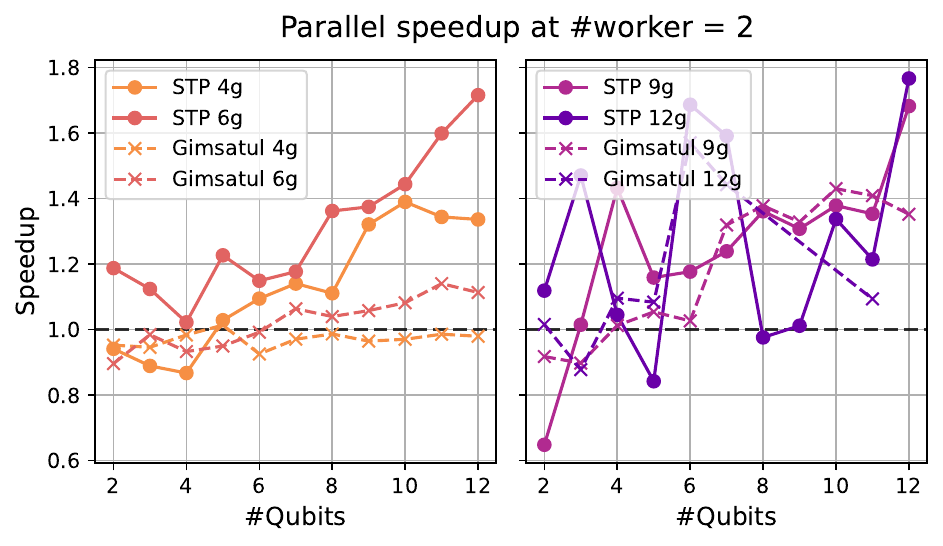}
    \includegraphics[width=0.49\textwidth]{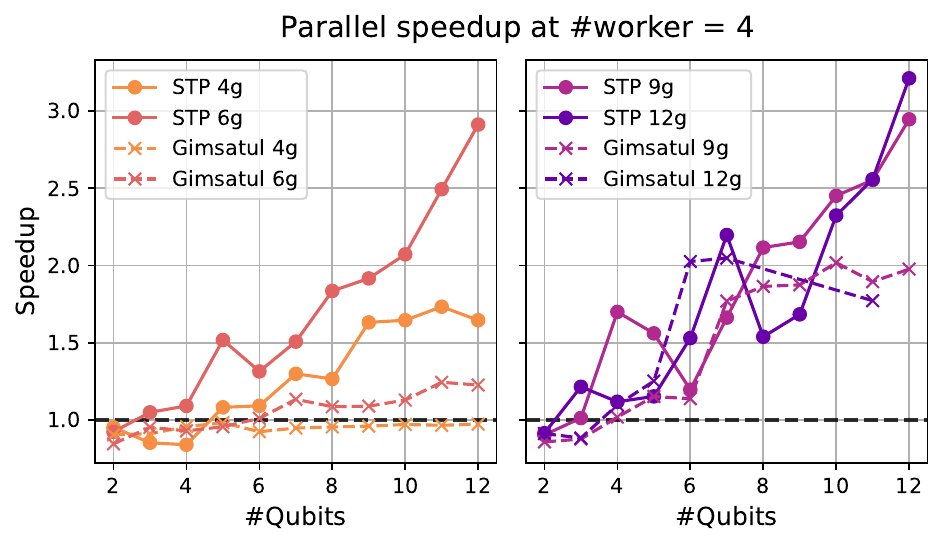}
    \includegraphics[width=0.49\textwidth]{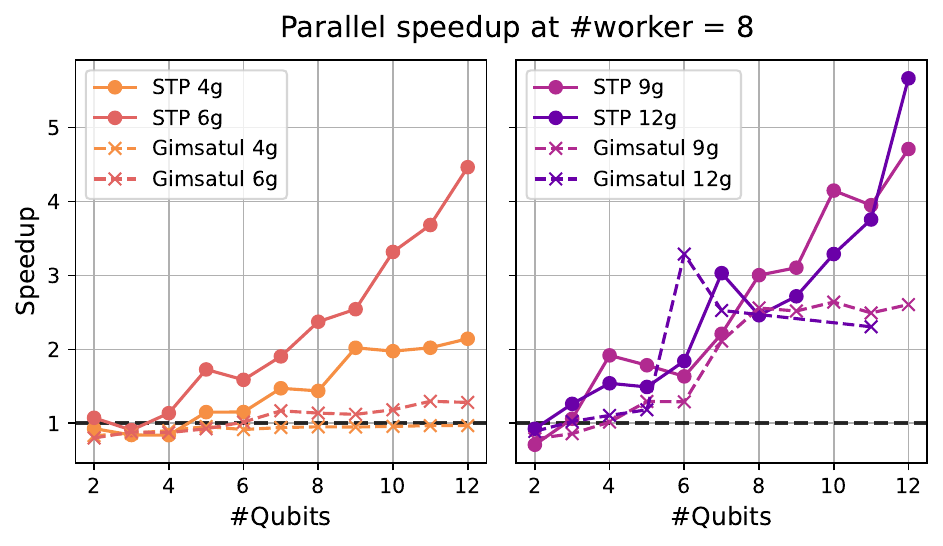}
    \includegraphics[width=0.49\textwidth]{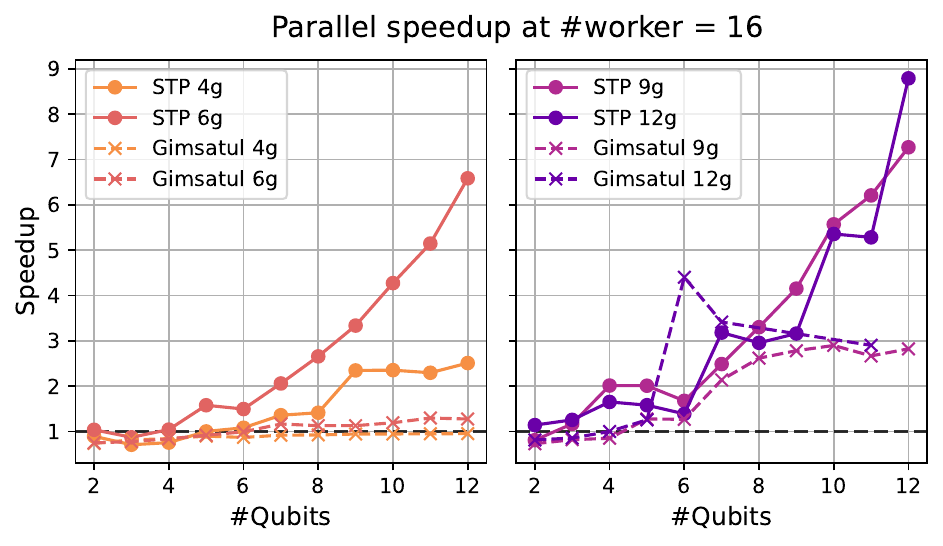}
    \includegraphics[width=0.49\textwidth]{figures_parallel_speedup_workers32.pdf}
    \caption{STP against Qiskit-SAT with the Gimsatul backend with different numbers of workers on random CNOT circuits. For each data point, 10 random circuits are generated and tested.}
    %\label{fig:label}
\end{figure}

% -----------------------------------------------------------------------------
% Appendix 3
% -----------------------------------------------------------------------------
\newpage
\section{Detailed data on circuit peephole rewriting optimization on QASMBench}
\label{sec:supp_qasmbench_table}

% Generated by exp/anal/build_phase_poly_table_v7_detailed.py.
% Requires: adjustbox, booktabs, makecell, and multirow.
% Generated by exp/anal/build_phase_poly_table_v7_detailed.py.
% Requires: adjustbox, booktabs, makecell, and multirow.
\begin{table*}[h]
\caption{Detailed circuit peephole rewriting optimization on QASMBench circuits with $n_g\le 100$.}
\label{table:detailed_qasmbench_v7_lt100}
\begin{adjustbox}{max width=\textwidth, center}
\begin{tabular}{lccccc|ccccc}
\toprule
\multirow{2}{*}[-0.8ex]{Case} &
\multirow{2}{*}[-0.8ex]{\#qubits} &
\multirow{2}{*}[-0.8ex]{\#gates} &
\multirow{2}{*}[-0.8ex]{\makecell{\#CNOT\\shards}} &
\multirow{2}{*}[-0.8ex]{\makecell{\#Phase Poly.\\shards}} &
\multirow{2}{*}[-0.8ex]{\#CNOT} &
\multirow{2}{*}[-0.8ex]{\makecell{Qiskit-SAT~\cite{qiskit-sat}\\Runtime($s$)}} &
\multirow{2}{*}[-0.8ex]{\makecell{Gimsatul~\cite{fleury2022gimsatul}\\Runtime($s$) (=1)}} &
\multirow{2}{*}[-0.8ex]{\makecell{\textbf{STP (ours)}\\Runtime($s$)}} &
\multicolumn{2}{c}{\textbf{STP+Gimsatul Portfolio}} \\
\cmidrule(lr){10-11}
& & & & & & & & & Runtime($s$) & STP wins \\
\midrule
cat\_state\_n4 & 4 & 4 & 1 & 0 & 3 & 1.4 (0.1$\times$) & 0.13 & 0.02 (5.9$\times$) & \textbf{0.02 (6.6$\times$)} & 1/1 (100\%) \\
qrng\_n4 & 4 & 4 & 0 & 0 & 0 & \textbf{0 (NA)} & \textbf{0} & \textbf{0 (NA)} & \textbf{0 (NA)} & 0/0 (NA) \\
deutsch\_n2 & 2 & 5 & 0 & 0 & 1 & 0.00040 (0.2$\times$) & 0.000065 & \textbf{0.000060 (1.1$\times$)} & 0.000061 (1.1$\times$) & 0/0 (NA) \\
teleportation\_n3 & 3 & 8 & 1 & 0 & 2 & 0.00040 (0.2$\times$) & 0.000070 & \textbf{0.000068 (1.0$\times$)} & 0.000077 (0.9$\times$) & 0/0 (NA) \\
iswap\_n2 & 2 & 9 & 0 & 0 & 2 & 0.00048 (0.2$\times$) & \textbf{0.000092} & 0.000098 (0.9$\times$) & 0.00012 (0.7$\times$) & 0/0 (NA) \\
lpn\_n5 & 5 & 11 & 1 & 0 & 2 & 0.00035 (0.2$\times$) & \textbf{0.000067} & 0.000085 (0.8$\times$) & 0.000070 (1.0$\times$) & 0/0 (NA) \\
qaoa\_n3 & 3 & 15 & 2 & 1 & 6 & 0.51 (0.6$\times$) & 0.29 & \textbf{0.04 (6.6$\times$)} & 0.05 (5.5$\times$) & 2/2 (100\%) \\
grover\_n2 & 2 & 16 & 0 & 0 & 2 & 0.00042 (0.2$\times$) & 0.000091 & 0.000091 (1.0$\times$) & \textbf{0.000090 (1.0$\times$)} & 0/0 (NA) \\
toffoli\_n3 & 3 & 18 & 1 & 1 & 6 & 0.40 (0.6$\times$) & 0.25 & \textbf{0.04 (6.4$\times$)} & 0.05 (5.1$\times$) & 1/1 (100\%) \\
fredkin\_n3 & 3 & 19 & 3 & 2 & 8 & 0.30 (0.7$\times$) & 0.22 & 0.04 (5.6$\times$) & \textbf{0.04 (6.1$\times$)} & 1/1 (100\%) \\
cat\_state\_n22 & 22 & 22 & 4 & 0 & 21 & 65.0 (0.04$\times$) & 2.7 & 0.22 (12.4$\times$) & \textbf{0.21 (13.3$\times$)} & 12/12 (100\%) \\
ghz\_state\_n23 & 23 & 23 & 5 & 0 & 22 & 83.6 (0.03$\times$) & 2.4 & 0.17 (14.3$\times$) & \textbf{0.16 (14.4$\times$)} & 9/9 (100\%) \\
adder\_n4 & 4 & 23 & 2 & 1 & 10 & 3.8 (0.1$\times$) & 0.53 & \textbf{0.16 (3.4$\times$)} & 0.16 (3.2$\times$) & 2/2 (100\%) \\
qec\_en\_n5 & 5 & 25 & 3 & 0 & 10 & 2.0 (0.1$\times$) & 0.28 & 0.04 (6.8$\times$) & \textbf{0.04 (7.3$\times$)} & 2/2 (100\%) \\
linearsolver\_n3 & 3 & 27 & 0 & 0 & 4 & 0.00057 (0.3$\times$) & \textbf{0.00015} & 0.00015 (1.0$\times$) & 0.00016 (0.9$\times$) & 0/0 (NA) \\
hs4\_n4 & 4 & 28 & 0 & 0 & 4 & 0.00049 (0.3$\times$) & 0.00014 & \textbf{0.00014 (1.0$\times$)} & 0.00014 (1.0$\times$) & 0/0 (NA) \\
wstate\_n3 & 3 & 34 & 1 & 3 & 9 & 0.53 (0.5$\times$) & 0.25 & \textbf{0.04 (6.0$\times$)} & 0.05 (5.2$\times$) & 1/1 (100\%) \\
cat\_n35 & 35 & 35 & 7 & 0 & 34 & 110 (0.04$\times$) & 4.3 & 0.34 (12.7$\times$) & \textbf{0.31 (13.6$\times$)} & 18/18 (100\%) \\
ghz\_n40 & 40 & 40 & 8 & 0 & 39 & 124 (0.03$\times$) & 4.3 & 0.35 (12.1$\times$) & \textbf{0.33 (13.1$\times$)} & 19/19 (100\%) \\
bv\_n14 & 14 & 41 & 0 & 0 & 13 & 0.00080 (0.4$\times$) & \textbf{0.00035} & 0.00039 (0.9$\times$) & 0.00035 (1.0$\times$) & 0/0 (NA) \\
quantumwalks\_n2 & 2 & 43 & 0 & 0 & 3 & 0.00049 (0.3$\times$) & 0.00013 & \textbf{0.00011 (1.2$\times$)} & 0.00011 (1.2$\times$) & 0/0 (NA) \\
simon\_n6 & 6 & 44 & 3 & 2 & 14 & 1.0 (0.5$\times$) & 0.55 & \textbf{0.18 (3.0$\times$)} & 0.20 (2.7$\times$) & 2/2 (100\%) \\
qec9xz\_n17 & 17 & 53 & 9 & 0 & 32 & 3.8 (0.1$\times$) & 0.31 & \textbf{0.04 (8.4$\times$)} & 0.04 (8.0$\times$) & 2/2 (100\%) \\
variational\_n4 & 4 & 54 & 0 & 8 & 16 & \textbf{0.0048 (7.6$\times$)} & 0.04 & 0.04 (1.0$\times$) & 0.04 (1.0$\times$) & 0/0 (NA) \\
bv\_n19 & 19 & 56 & 0 & 0 & 18 & 0.0011 (0.4$\times$) & \textbf{0.00045} & 0.00054 (0.8$\times$) & 0.00048 (0.9$\times$) & 0/0 (NA) \\
cat\_n65 & 65 & 65 & 13 & 0 & 64 & 269 (0.03$\times$) & 7.9 & 0.63 (12.6$\times$) & \textbf{0.61 (12.9$\times$)} & 34/34 (100\%) \\
bell\_n4 & 4 & 65 & 0 & 2 & 7 & 0.0022 (0.5$\times$) & 0.0010 & 0.00097 (1.0$\times$) & \textbf{0.00095 (1.1$\times$)} & 0/0 (NA) \\
bv\_n30 & 30 & 78 & 0 & 0 & 18 & 0.0010 (0.5$\times$) & 0.00049 & 0.00051 (1.0$\times$) & \textbf{0.00044 (1.1$\times$)} & 0/0 (NA) \\
ghz\_n78 & 78 & 78 & 16 & 0 & 77 & 216 (0.04$\times$) & 9.1 & 0.73 (12.6$\times$) & \textbf{0.67 (13.6$\times$)} & 38/38 (100\%) \\
vqe\_n4 & 4 & 89 & 0 & 0 & 9 & 0.00070 (0.4$\times$) & 0.00028 & \textbf{0.00027 (1.0$\times$)} & 0.00027 (1.0$\times$) & 0/0 (NA) \\
multiply\_n13 & 13 & 98 & 7 & 6 & 40 & 10.9 (0.1$\times$) & 1.3 & 0.23 (5.7$\times$) & \textbf{0.23 (5.8$\times$)} & 7/7 (100\%) \\
pea\_n5 & 5 & 98 & 7 & 4 & 42 & 13.9 (0.1$\times$) & 1.5 & 60.2 (0.03$\times$) & \textbf{1.2 (1.3$\times$)} & 6/7 (86\%) \\
\midrule
\textbf{Median speedup} &  &  &  &  &  & 0.2$\times$ & 1.0$\times$ & 3.0$\times$ & 2.7$\times$ &  \\
\bottomrule
\end{tabular}
\end{adjustbox}
\end{table*}

\begin{table*}[tb]
\caption{Detailed circuit peephole rewriting optimization on QASMBench circuits with $100< n_g\le 1{,}000$.}
\label{table:detailed_qasmbench_v7_100_to_1000}
\begin{adjustbox}{max width=\textwidth, center}
\begin{tabular}{lccccc|ccccc}
\toprule
\multirow{2}{*}[-0.8ex]{Case} &
\multirow{2}{*}[-0.8ex]{\#qubits} &
\multirow{2}{*}[-0.8ex]{\#gates} &
\multirow{2}{*}[-0.8ex]{\makecell{\#CNOT\\shards}} &
\multirow{2}{*}[-0.8ex]{\makecell{\#Phase Poly.\\shards}} &
\multirow{2}{*}[-0.8ex]{\#CNOT} &
\multirow{2}{*}[-0.8ex]{\makecell{Qiskit-SAT~\cite{qiskit-sat}\\Runtime($s$)}} &
\multirow{2}{*}[-0.8ex]{\makecell{Gimsatul~\cite{fleury2022gimsatul}\\Runtime($s$) (=1)}} &
\multirow{2}{*}[-0.8ex]{\makecell{\textbf{STP (ours)}\\Runtime($s$)}} &
\multicolumn{2}{c}{\textbf{STP+Gimsatul Portfolio}} \\
\cmidrule(lr){10-11}
& & & & & & & & & Runtime($s$) & STP wins \\
\midrule
basis\_test\_n4 & 4 & 110 & 2 & 20 & 46 & 13.9 (0.1$\times$) & 1.5 & 46.0 (0.03$\times$) & \textbf{1.4 (1.1$\times$)} & 7/10 (70\%) \\
error\_correctiond3\_n5 & 5 & 114 & 9 & 2 & 49 & 2.0 (0.2$\times$) & 0.31 & \textbf{0.11 (2.9$\times$)} & 0.12 (2.7$\times$) & 6/6 (100\%) \\
qpe\_n9 & 9 & 123 & 11 & 11 & 43 & 37.2 (0.1$\times$) & 4.9 & 181 (0.03$\times$) & \textbf{4.4 (1.1$\times$)} & 5/10 (50\%) \\
ghz\_n127 & 127 & 127 & 25 & 0 & 126 & 391 (0.04$\times$) & 16.3 & 1.3 (12.9$\times$) & \textbf{1.2 (13.2$\times$)} & 65/65 (100\%) \\
cat\_n130 & 130 & 130 & 26 & 0 & 129 & 406 (0.04$\times$) & 16.7 & 1.3 (12.9$\times$) & \textbf{1.3 (13.3$\times$)} & 67/67 (100\%) \\
adder\_n10 & 10 & 142 & 9 & 12 & 65 & 3.5 (0.3$\times$) & 1.1 & \textbf{0.34 (3.3$\times$)} & 0.34 (3.2$\times$) & 13/13 (100\%) \\
basis\_change\_n3 & 3 & 145 & 0 & 0 & 10 & 0.00071 (0.4$\times$) & 0.00027 & 0.00027 (1.0$\times$) & \textbf{0.00025 (1.1$\times$)} & 0/0 (NA) \\
wstate\_n27 & 27 & 157 & 5 & 0 & 52 & 82.4 (0.04$\times$) & 3.0 & 0.27 (11.0$\times$) & \textbf{0.26 (11.6$\times$)} & 14/14 (100\%) \\
bv\_n70 & 70 & 176 & 0 & 0 & 36 & 0.0018 (0.4$\times$) & \textbf{0.00074} & 0.00081 (0.9$\times$) & 0.00075 (1.0$\times$) & 0/0 (NA) \\
sat\_n7 & 7 & 180 & 8 & 12 & 60 & 4.7 (0.4$\times$) & 1.7 & \textbf{0.34 (5.0$\times$)} & 0.35 (4.8$\times$) & 11/11 (100\%) \\
wstate\_n36 & 36 & 211 & 7 & 0 & 70 & 112 (0.04$\times$) & 4.0 & 0.34 (11.9$\times$) & \textbf{0.30 (13.2$\times$)} & 17/17 (100\%) \\
vqe\_uccsd\_n4 & 4 & 220 & 22 & 20 & 88 & 14.5 (0.2$\times$) & 2.5 & \textbf{0.57 (4.4$\times$)} & 0.59 (4.2$\times$) & 24/24 (100\%) \\
knn\_n25 & 25 & 230 & 12 & 12 & 96 & 6.6 (0.4$\times$) & 2.4 & \textbf{0.40 (6.0$\times$)} & 0.48 (4.9$\times$) & 12/12 (100\%) \\
swap\_test\_n25 & 25 & 230 & 12 & 12 & 96 & 6.6 (0.3$\times$) & 2.3 & 0.43 (5.3$\times$) & \textbf{0.42 (5.5$\times$)} & 12/12 (100\%) \\
ghz\_state\_n255 & 255 & 255 & 51 & 0 & 254 & 691 (0.04$\times$) & 29.3 & 2.3 (12.8$\times$) & \textbf{2.0 (14.4$\times$)} & 115/115 (100\%) \\
cat\_n260 & 260 & 260 & 52 & 0 & 259 & 586 (0.05$\times$) & 29.1 & 2.2 (13.2$\times$) & \textbf{2.2 (13.5$\times$)} & 116/116 (100\%) \\
ising\_n26 & 26 & 280 & 12 & 13 & 50 & 2.9 (0.5$\times$) & 1.3 & \textbf{0.34 (4.0$\times$)} & 0.47 (2.8$\times$) & 12/12 (100\%) \\
bigadder\_n18 & 18 & 284 & 18 & 22 & 130 & 6.9 (0.4$\times$) & 2.4 & 0.78 (3.1$\times$) & \textbf{0.63 (3.9$\times$)} & 24/24 (100\%) \\
knn\_n31 & 31 & 287 & 15 & 15 & 120 & 8.0 (0.4$\times$) & 2.9 & 0.78 (3.7$\times$) & \textbf{0.77 (3.7$\times$)} & 15/15 (100\%) \\
qf21\_n15 & 15 & 311 & 39 & 31 & 115 & 81.0 (0.2$\times$) & 14.0 & 660 (0.02$\times$) & \textbf{12.5 (1.1$\times$)} & 16/31 (52\%) \\
qram\_n20 & 20 & 321 & 25 & 25 & 136 & 42.7 (0.2$\times$) & 6.6 & 73.5 (0.1$\times$) & \textbf{4.8 (1.4$\times$)} & 22/30 (73\%) \\
bv\_n140 & 140 & 352 & 0 & 0 & 72 & 0.0029 (0.5$\times$) & 0.0014 & \textbf{0.0014 (1.0$\times$)} & 0.0014 (1.0$\times$) & 0/0 (NA) \\
ising\_n34 & 34 & 368 & 16 & 17 & 66 & 3.9 (0.5$\times$) & 1.8 & \textbf{0.59 (3.1$\times$)} & 0.65 (2.8$\times$) & 16/16 (100\%) \\
knn\_n41 & 41 & 382 & 20 & 20 & 160 & 10.9 (0.4$\times$) & 3.9 & \textbf{0.73 (5.4$\times$)} & 1.0 (3.8$\times$) & 20/20 (100\%) \\
swap\_test\_n41 & 41 & 382 & 20 & 20 & 160 & 10.8 (0.4$\times$) & 4.1 & \textbf{0.71 (5.7$\times$)} & 1.1 (3.7$\times$) & 20/20 (100\%) \\
adder\_n28 & 28 & 424 & 26 & 33 & 195 & 267 (0.2$\times$) & 47.2 & 12.1 (3.9$\times$) & \textbf{5.1 (9.2$\times$)} & 38/44 (86\%) \\
dnn\_n2 & 2 & 450 & 0 & 12 & 42 & 0.02 (0.9$\times$) & \textbf{0.01} & 0.01 (1.0$\times$) & 0.01 (0.9$\times$) & 0/0 (NA) \\
wstate\_n76 & 76 & 451 & 15 & 0 & 150 & 222 (0.04$\times$) & 8.8 & 0.77 (11.4$\times$) & \textbf{0.74 (11.9$\times$)} & 39/39 (100\%) \\
ising\_n42 & 42 & 456 & 20 & 21 & 82 & 4.9 (0.4$\times$) & 2.0 & \textbf{0.57 (3.6$\times$)} & 0.62 (3.3$\times$) & 20/20 (100\%) \\
ising\_n10 & 10 & 480 & 20 & 25 & 90 & 4.9 (0.4$\times$) & 2.0 & \textbf{0.58 (3.5$\times$)} & 0.62 (3.3$\times$) & 20/20 (100\%) \\
qaoa\_n6 & 6 & 558 & 0 & 16 & 54 & 0.05 (0.5$\times$) & 0.02 & \textbf{0.02 (1.0$\times$)} & 0.02 (1.0$\times$) & 0/0 (NA) \\
multiplier\_n15 & 15 & 574 & 44 & 45 & 246 & 44.0 (0.2$\times$) & 8.1 & 20.0 (0.4$\times$) & \textbf{3.6 (2.2$\times$)} & 53/55 (96\%) \\
dnn\_n33 & 33 & 608 & 16 & 61 & 248 & 10.2 (0.4$\times$) & 4.1 & \textbf{1.2 (3.4$\times$)} & 1.3 (3.2$\times$) & 31/31 (100\%) \\
knn\_n67 & 67 & 629 & 33 & 33 & 264 & 18.3 (0.4$\times$) & 6.5 & \textbf{1.3 (5.1$\times$)} & 1.4 (4.7$\times$) & 33/33 (100\%) \\
sat\_n11 & 11 & 679 & 34 & 50 & 252 & 20.1 (0.4$\times$) & 7.2 & \textbf{1.7 (4.2$\times$)} & 1.9 (3.8$\times$) & 58/58 (100\%) \\
hhl\_n7 & 7 & 689 & 12 & 70 & 196 & 19.2 (0.2$\times$) & 3.0 & 77.6 (0.04$\times$) & \textbf{2.7 (1.1$\times$)} & 6/9 (67\%) \\
wstate\_n118 & 118 & 703 & 24 & 0 & 234 & 344 (0.04$\times$) & 14.2 & 1.3 (10.8$\times$) & \textbf{1.2 (12.0$\times$)} & 61/61 (100\%) \\
bv\_n280 & 280 & 712 & 0 & 0 & 152 & 0.01 (0.5$\times$) & 0.0030 & \textbf{0.0029 (1.0$\times$)} & 0.0030 (1.0$\times$) & 0/0 (NA) \\
ising\_n66 & 66 & 720 & 32 & 33 & 130 & 7.8 (0.5$\times$) & 3.5 & \textbf{0.95 (3.7$\times$)} & 1.0 (3.5$\times$) & 32/32 (100\%) \\
qugan\_n39 & 39 & 759 & 38 & 55 & 296 & 10.4 (0.4$\times$) & 3.9 & \textbf{0.94 (4.2$\times$)} & 1.1 (3.7$\times$) & 19/19 (100\%) \\
swap\_test\_n83 & 83 & 781 & 41 & 41 & 328 & 22.4 (0.4$\times$) & 8.3 & \textbf{1.6 (5.2$\times$)} & 1.8 (4.7$\times$) & 41/41 (100\%) \\
qft\_n18 & 18 & 783 & 136 & 45 & 306 & 226 (0.1$\times$) & \textbf{33.5} & 1921 (0.02$\times$) & 34.8 (1.0$\times$) & 8/40 (20\%) \\
dnn\_n51 & 51 & 959 & 25 & 97 & 392 & 16.9 (0.4$\times$) & 7.1 & \textbf{2.4 (2.9$\times$)} & 2.6 (2.8$\times$) & 49/49 (100\%) \\
adder\_n64 & 64 & 988 & 60 & 81 & 455 & 711 (0.2$\times$) & 139 & 30.2 (4.6$\times$) & \textbf{13.2 (10.6$\times$)} & 86/100 (86\%) \\
\midrule
\textbf{Median speedup} &  &  &  &  &  & 0.4$\times$ & 1.0$\times$ & 3.7$\times$ & 3.6$\times$ &  \\
\bottomrule
\end{tabular}
\end{adjustbox}
\end{table*}

\begin{table*}[tb]
\caption{Detailed circuit peephole rewriting optimization on QASMBench circuits with $1{,}000< n_g\le 10{,}000$.}
\label{table:detailed_qasmbench_v7_1000_to_10000}
\begin{adjustbox}{max width=\textwidth, center}
\begin{tabular}{lccccc|ccccc}
\toprule
\multirow{2}{*}[-0.8ex]{Case} &
\multirow{2}{*}[-0.8ex]{\#qubits} &
\multirow{2}{*}[-0.8ex]{\#gates} &
\multirow{2}{*}[-0.8ex]{\makecell{\#CNOT\\shards}} &
\multirow{2}{*}[-0.8ex]{\makecell{\#Phase Poly.\\shards}} &
\multirow{2}{*}[-0.8ex]{\#CNOT} &
\multirow{2}{*}[-0.8ex]{\makecell{Qiskit-SAT~\cite{qiskit-sat}\\Runtime($s$)}} &
\multirow{2}{*}[-0.8ex]{\makecell{Gimsatul~\cite{fleury2022gimsatul}\\Runtime($s$) (=1)}} &
\multirow{2}{*}[-0.8ex]{\makecell{\textbf{STP (ours)}\\Runtime($s$)}} &
\multicolumn{2}{c}{\textbf{STP+Gimsatul Portfolio}} \\
\cmidrule(lr){10-11}
& & & & & & & & & Runtime($s$) & STP wins \\
\midrule
ising\_n98 & 98 & 1072 & 48 & 49 & 194 & 11.9 (0.4$\times$) & 5.3 & \textbf{1.5 (3.6$\times$)} & 1.7 (3.0$\times$) & 48/48 (100\%) \\
swap\_test\_n115 & 115 & 1085 & 57 & 57 & 456 & 32.0 (0.4$\times$) & 11.5 & \textbf{2.5 (4.7$\times$)} & 2.7 (4.2$\times$) & 57/57 (100\%) \\
knn\_129 & 129 & 1218 & 64 & 64 & 512 & 34.2 (0.4$\times$) & 13.0 & 3.6 (3.6$\times$) & \textbf{3.0 (4.3$\times$)} & 64/64 (100\%) \\
qugan\_n71 & 71 & 1415 & 70 & 103 & 552 & 21.9 (0.4$\times$) & 7.9 & \textbf{2.6 (3.1$\times$)} & 2.8 (2.8$\times$) & 35/35 (100\%) \\
adder\_n118 & 118 & 1834 & 111 & 153 & 845 & 1422 (0.2$\times$) & 245 & 56.1 (4.4$\times$) & \textbf{26.3 (9.3$\times$)} & 158/184 (86\%) \\
dnn\_n8 & 8 & 2032 & 0 & 60 & 192 & 0.61 (0.9$\times$) & 0.53 & 0.53 (1.0$\times$) & \textbf{0.53 (1.0$\times$)} & 0/0 (NA) \\
qft\_n29 & 29 & 2059 & 323 & 100 & 812 & 402 (0.2$\times$) & \textbf{83.8} & \textgreater 3600 & 87.7 (1.0$\times$) & 19/95 (20\%) \\
qugan\_n111 & 111 & 2235 & 110 & 163 & 872 & 28.6 (0.5$\times$) & 15.4 & 7.1 (2.2$\times$) & \textbf{7.0 (2.2$\times$)} & 55/55 (100\%) \\
wstate\_n380 & 380 & 2275 & 76 & 0 & 758 & 725 (0.1$\times$) & 41.1 & 3.1 (13.1$\times$) & \textbf{2.9 (14.1$\times$)} & 140/140 (100\%) \\
vqe\_uccsd\_n6 & 6 & 2282 & 274 & 150 & 1052 & 525 (0.1$\times$) & 58.7 & \textbf{12.0 (4.9$\times$)} & 13.0 (4.5$\times$) & 312/312 (100\%) \\
basis\_trotter\_n4 & 4 & 2490 & 56 & 150 & 582 & 27.2 (0.3$\times$) & 7.3 & 49.1 (0.1$\times$) & \textbf{4.8 (1.5$\times$)} & 57/58 (98\%) \\
gcm\_h6 & 13 & 3148 & 92 & 247 & 762 & 32.6 (0.6$\times$) & 21.0 & \textbf{13.8 (1.5$\times$)} & 14.2 (1.5$\times$) & 82/82 (100\%) \\
knn\_341 & 341 & 3232 & 170 & 170 & 1360 & 98.2 (0.4$\times$) & 40.9 & 16.8 (2.4$\times$) & \textbf{15.3 (2.7$\times$)} & 170/170 (100\%) \\
swap\_test\_n361 & 361 & 3422 & 180 & 180 & 1440 & 105 (0.4$\times$) & 44.3 & \textbf{18.6 (2.4$\times$)} & 18.6 (2.4$\times$) & 180/180 (100\%) \\
dnn\_n16 & 16 & 4064 & 0 & 124 & 384 & 9.4 (0.4$\times$) & 3.9 & 3.9 (1.0$\times$) & \textbf{3.8 (1.0$\times$)} & 0/0 (NA) \\
ising\_n420 & 420 & 4614 & 209 & 210 & 838 & 46.2 (0.7$\times$) & 31.5 & \textbf{15.8 (2.0$\times$)} & 17.1 (1.8$\times$) & 209/209 (100\%) \\
multiplier\_n45 & 45 & 5981 & 429 & 453 & 2574 & 332 (0.5$\times$) & 160 & 125 (1.3$\times$) & \textbf{103 (1.6$\times$)} & 600/605 (99\%) \\
adder\_n433 & 433 & 6769 & 408 & 573 & 3120 & 5125 (0.2$\times$) & 1144 & 340 (3.4$\times$) & \textbf{233 (4.9$\times$)} & 587/683 (86\%) \\
qugan\_n395 & 395 & 8057 & 394 & 589 & 3144 & 310 (0.8$\times$) & 243 & 217 (1.1$\times$) & \textbf{216 (1.1$\times$)} & 197/197 (100\%) \\
qft\_n63 & 63 & 9828 & 901 & 270 & 3906 & 2998 (0.1$\times$) & \textbf{317} & \textgreater 3600 & 335 (0.9$\times$) & 24/265 (9\%) \\
\midrule
\textbf{Median speedup} &  &  &  &  &  & 0.4$\times$ & 1.0$\times$ & 2.4$\times$ & 2.3$\times$ &  \\
\bottomrule
\end{tabular}
\end{adjustbox}
\end{table*}

\clearpage
\endgroup
\end{document}